\newcommand{\CCC}{\mathcal{C}} 
\newcommand{\GGG}{\mathcal{G}} 
 \newcommand{\LLL}{\mathcal{L}}
\newcommand{\SSS}{\mathcal{S}}
\newtheorem{theorem}{Theorem}
\newtheorem{definition}[theorem]{Definition}
\newtheorem{corollary}[theorem]{Corollary}
\newtheorem{lemma}[theorem]{Lemma}
\newcommand{\Am}{\mathbb{A}}
\newcommand{\Bm}{\mathbb{B}}
\newcommand{\Aa}{\Am}
\newcommand{\Bb}{\Bm}
\newcommand{\nats}{\varmathbb{N}}
\newcommand{\mb}{\mathbf}
\newcommand{\ra}{\rightarrow}
\newcommand{\pfunc}{\rightharpoonup}
\newcommand{\minor}{\preceq}
\newcommand{\grad}{\nabla}
\newcommand{\iso}{\cong}
\newcommand{\defeq}{:=}
\renewcommand{\hat}{\widehat}
\newcommand{\tup}[1]{\mathbf{#1}}
\renewcommand{\phi}{\varphi}
 \title{Homomorphism Preservation on Quasi-Wide Classes}
 \author{Anuj Dawar \\University of Cambridge Computer Laboratory, UK}
\begin{document}

%\begin{frontmatter}

\maketitle

\begin{abstract}
  A class of structures is said to have the homomorphism-preservation
  property just in case every first-order formula that is preserved by
  homomorphisms on this class is equivalent to an existential-positive
  formula.  It is known by a result of Rossman that the class of
  finite structures has this property and by previous work of Atserias
  et al.\ that various of its subclasses do.  We extend the latter
  results by introducing the notion of a quasi-wide class and showing
  that any quasi-wide class that is closed under taking substructures
  and disjoint unions has the homomorphism-preservation property.  We
  show, in particular, that classes of structures of bounded expansion
  and classes that locally exclude minors are quasi-wide.  We also construct an
  example of a class of finite structures which is closed under
  substructures and disjoint unions but does not admit the
  homomorphism-preservation property.
\end{abstract}

\section{Introduction}\label{sec:intro}

Preservation theorems are model-theoretic results that link semantic
restrictions on a logic with corresponding syntactic restrictions.
For instance, the {\L}o\'s-Tarski preservation theorem guarantees that
any first-order formula whose models are closed under extensions
is equivalent to an existential formula.  In the early development of
finite model theory, it was noted that many classical preservation
theorems of model theory fail when we are only interested in finite
structures (see~\cite{Gur84}).  The {\L}o\'s-Tarski theorem is an
example of one such---it was noted by Tait~\cite{Tai59} that there
are formulas of first-order logic whose \emph{finite} models are
closed under extension but that are not equivalent, even in
restriction to finite structures, to an existential formula.
Similarly, Ajtai and Gurevich~\cite{AG87} established that Lyndon's
theorem---which implies that any formula that is monotone on all
structures is equivalent to one that is positive---also fails in the
finite.  One example of a preservation theorem whose status in the
finite remained open for many years is the homomorphism preservation
theorem.  This states that a first-order formula whose models are
closed under homomorphisms is equivalent to an existential-positive
formula.  Rossman recently proved~\cite{Ros08} that this holds, even
when we restrict ourselves to finite structures.

A recent trend in finite model theory has sought to examine
model-theoretic questions, such as the preservation properties, not
just on the class of all finite structures but on subclasses that are
of interest from the algorithmic point of view (see~\cite{Daw07-mfcs} for an
overview of results in this direction).  Thus, prior to Rossman's
result, Atserias et al.~\cite{ADK06} proved that the homomorphism
preservation theorem holds in any class of structures $\CCC$ of
\emph{bounded treewidth} which is closed under substructures and
disjoint unions.  More generally, they showed that homomorphism
preservation holds on $\CCC$ provided that the \emph{Gaifman graphs}
of structures in $\CCC$ exclude some minor and $\CCC$ is closed under
substructures and disjoint unions.  Note that these results are not
implied by Rossman's theorem.  Indeed, if we consider two classes
$\CCC \subseteq \CCC'$, we cannot conclude anything about whether or
not homomorphism preservation holds on $\CCC$ from the fact that it
holds on $\CCC'$.  An example of a class of finite structures on which
homomorphism preservation fails is discussed in
Section~\ref{sec:counterexample}.

An open question that was posed in~\cite{ADK06} was whether the
results from that paper could be extended to other classes, in
particular by replacing the requirement that $\CCC$ exclude a minor by
the requirement that $\CCC$ have \emph{bounded local treewidth} as
defined in~\cite{Epp00,FG01}.  This restriction is incomparable with
the requirement that $\CCC$ excludes a minor, in the sense that there
are classes with an excluded minor that do not have bounded local
treewidth and vice versa.  However, there is a common generalisation
of the two in the notion of \emph{locally excluded minors} introduced
by Dawar et al.~\cite{DGK07}.  In this paper, we answer the open
question from~\cite{ADK06} by showing that any class $\CCC$ of finite
structures that locally excludes a minor and is closed under taking
substructures and disjoint unions satisfies the homomorphism
preservation property.  We also establish this for classes of
\emph{bounded expansion}, as defined by Ne\v{s}et\v{r}il and Ossona de
Mendez~\cite{NOdM05}.

The proof given in~\cite{ADK06} that classes of structures that
exclude a minor satisfy homomorphism preservation was composed of two
elements.  First, a result derived from a lemma by Ajtai and
Gurevich~\cite{AG94} that showed a certain density property for
minimal models of a formula $\phi$ that is preserved under
homomorphisms.  This implies that if a class $\CCC$ satisfies the
condition of being \emph{almost wide} (this is defined in
Section~\ref{sec:prelim} below) and is closed under substructures and
disjoint unions, then $\CCC$ satisfies homomorphism preservation.
Secondly, we showed, using a combinatorial construction
from~\cite{KS99}, that any class that excludes some graph as a minor is
almost wide.  In order to extend these results
to classes that locally exclude a minor and classes of bounded
expansion, we first define a relaxation of the almost wideness
condition to one we term \emph{quasi-wideness}.  We show that the
Ajtai-Gurevich lemma can be adapted to show that any class $\CCC$
which is quasi-wide and closed under substructures and disjoint unions
also satisfies homomorphism preservation.  This is established in
Section~\ref{sec:ajtai-g}.  Then, an extension of the
combinatorial argument from~\cite{ADK06} establishes that classes of
bounded expansion and classes that locally exclude a minor are
\emph{quasi-wide}.  These arguments are presented in
Section~\ref{sec:classes}.

The steady recurrence of the requirement that $\CCC$ is closed under
substructures and disjoint unions arises from the fact that these are
the constructions used in the density argument of Ajtai and Gurevich.
A natural question that arises is 
whether these conditions alone might be sufficient to guarantee
homomorphism preservation.  However, this is not the case, as we
establish through a counter-example constructed in
Section~\ref{sec:counterexample}.

I announced the results presented here in an invited lecture~\cite{Daw07-mfcs}, without presenting the proofs.
Since then, Ne\v{s}et\v{r}il and
Ossona de Mendez have extended the combinatorial argument from
Section~\ref{sec:classes} and provided an elegant characterisation of
quasi-wide classes that are closed under substructures~\cite{NOdM08}.

\textbf{Acknowledgements:} The results reported here were obtained
during a visit made to Cambridge by Guillaume Malod in the summer of
2007.  I am grateful to him for stimulating discussions and for his
help with the material.  I am also grateful to Jarik Ne\v{s}et\v{r}il
for his repeated encouragement to write this paper ever since I told
him the results.

\section{Preliminaries}\label{sec:prelim}
This section contains the definitions of some basic notions
and a minimum amount of background material.

\subsection{Relational Structures }

A \emph{relational vocabulary} $\sigma$ is a finite set  of
\emph{relation symbols}, each with a specified \emph{arity}. A
\emph{$\sigma$-structure} ${\Aa}$ consists of a \emph{universe}
$A$, or \emph{domain}, and an \emph{interpretation} which associates
to each relation symbol $R\in \sigma$ of some arity $r$, a relation
$R^{\Aa}\subseteq A^r$. A \emph{graph} is a structure ${\bf G} =
(V,E)$, where  $E$ is a binary relation that is symmetric and
irreflexive. Thus, our graphs are undirected, loopless, and without
parallel edges.

A $\sigma$-structure ${\Bb}$ is called a \emph{substructure} of
${\Aa}$ (and we write $\Bb \subseteq \Aa$) if $B\subseteq A$ and
$R^{\Bb}\subseteq R^{\Aa}$ for every $R\in \sigma$. It is called an
\emph{induced substructure} if $R^{\Bb} = R^{\Aa} \cap B^r$ for every
$R\in \sigma$ of arity $r$.  Note that this terminology is at variance
with common usage in model theory where the term ``substructure'' is
used for what we call an ``induced substructure''.  However, it is
more convenient for us as, for the purpose of studying properties
preserved under homomorphisms, we are more interested in substructures
that are not necessarily induced.  Note also the analogy with the
concepts of \emph{subgraph} and \emph{induced subgraph} from graph
theory. A substructure ${\Bb}$ of ${\Aa}$ is proper if
${\Aa}\not={\Bb}$.

A \emph{homomorphism} from ${\Aa}$ to ${\Bb}$ is a mapping
$h:A\rightarrow B$ from the universe of ${\Aa}$ to the universe of
${\Bb}$ that preserves the relations, that is if
$(a_1,\ldots,a_r)\in R^{\Aa}$, then $(h(a_1),\ldots,h(a_r))\in
R^{\Bb}$.  We say that two structures $\Aa$ and $\Bb$ are
{\em homomorphically equivalent\/} if there is a homomorphism from
$\Aa$ to $\Bb$ and a homomorphism from $\Bb$ to $\Aa$.
Note that, if $\Aa$ is a substructure of $\Bb$, then the
injection mapping is a homomorphism from $\Aa$ to $\Bb$.  If the
homomorphism $h$ is bijective and its inverse is a homomorphism from
$\Bb$ to $\Aa$ then $\Aa$ and $\Bb$ are isomorphic and we write $\Aa
\iso \Bb$.

For a pair of structures $\Aa$ and $\Bb$, we write $\Aa \oplus \Bb$
for the \emph{disjoint union} of $\Aa$ and $\Bb$.  That is, $\Aa
\oplus \Bb$ is the structure whose universe is the disjoint union of
$A$ and $B$ and where, for any relation symbol $R$ and any tuple of
elements $\mb{t}$, we have $\mb{t} \in R^{\Aa \oplus \Bb}$ just in
case either $\mb{t} \in R^{\Aa}$ or $\mb{t} \in R^{\Bb}$.

The \emph{Gaifman graph} of a $\sigma$-structure ${\Aa}$, denoted
by $\mathcal{G}({\Aa})$, is the (undirected) graph whose set of nodes
is the universe of ${\Aa}$, and whose set of edges consists of all
pairs $(a,a')$ of distinct elements  of $A$ such that $a$ and $a'$
appear together in some tuple of a relation in ${\Aa}$. 

Let $\mb{G} = (V,E)$ be a graph.  Recall that the \emph{distance} between two vertices $u$ and $v$ is the length of the shortest path from $u$ to $v$.  For a vertex $u$ and an integer $r\geq 0$,  \emph{$r$-neighborhood} of $u$
in $\mb{G}$, denoted by $N_r^\mb{G}(u)$, is the set of vertices at distance atmost $r$ from $u$.  In particular, $N_0^\mb{G}(u) = \{u\}$.

Where this causes no confusion, we also write $N_r^\mb{G}(u)$ for the subgraph of $\mb{G}$ induced by this set of vertices.
Similarly, for a structure $\Aa$ and an element $a$ in its universe, we write
$N_r^{\Aa}(a)$ both for the set $N_r^{\GGG(\Aa)}(a)$ and the substructure of $\Aa$ it induces.

\subsection{Logic}

Let $\sigma$ be a relational vocabulary.  The \emph{atomic formulas}
of $\sigma$ are those of the form $R(x_1,\ldots,x_r)$, where
$R\in\sigma$ is a relation symbol of arity $r$, and $x_1,\ldots,x_r$
are first-order variables that are not necessarily distinct.  Formulas
of the form $x=y$ are also atomic formulas, and we refer to them as
\emph{equalities}. The collection of \emph{first-order formulas} is
obtained by closing the atomic formulas under negation, conjunction,
disjunction, universal and existential first-order quantification. The
semantics of first-order logic is standard.  If ${\Aa}$ is a
$\sigma$-structure and $\varphi$ is a first-order formula, we use the
notation ${\Aa} \models \varphi[\mb{a}]$ to denote the fact that
$\varphi$ is true in ${\Aa}$ when its free variables are interpreted
by the tuple of elements $\mb{a}$.  When $\phi$ is a sentence (i.e.\
contains no free variables), we simply write ${\Aa} \models \varphi$.
The collection of \emph{existential-positive} first-order formulas is
obtained by closing the atomic formulas under conjunction,
disjunction, and existential quantification. By substituting
variables, it is easy to see that equalities can be eliminated from
existential-positive formulas.

We say that a first-order formula $\phi$ is \emph{preserved under
  homomorphisms} if, whenever $\Aa \models \phi[\mb{a}]$ and
$h:A\rightarrow B$ is a homomorphism from $\Aa$ to $\Bb$ then $\Bb
\models \phi[h({\mb{a}})]$.  It is an easy exercise to show that any
existential positive first-order formula is preserved under
homomorphisms.  The homomorphism preservation theorem provides a kind
of converse to this statement: every first-order formula that is
preserved under homomorphisms is logically equivalent to an
existential positive formula.

We are interested in versions of homomorphism preservation on
restricted classes of structures.  If $\CCC$ is a class of structures,
we say that a formula $\phi$ is \emph{preserved under homomorphisms on
  $\CCC$} if whenever $\Aa$ and $\Bb$ are structures in $\CCC$, $\Aa
\models \phi[\mb{a}]$ and $h:A\rightarrow B$ is a homomorphism from
$\Aa$ to $\Bb$ then $\Bb \models \phi[h({\mb{a}})]$.  We say that two
formulas $\phi$ and $\psi$ are \emph{equivalent on $\CCC$} if for every
structure $\Aa$ in $\CCC$ we have $\Aa \models (\phi \leftrightarrow
\psi)$.  We say that $\CCC$ has the \emph{homomorphism preservation
  property} if every formula $\phi$ that is preserved under
homomorphisms on $\CCC$ is equivalent on $\CCC$ to an existential
positive formula.  By a theorem of Rossman~\cite{Ros08}, the class of
finite structures has the homomorphism preservation property.

For a sentence $\phi$ preserved under homomorphisms on a class of
structures $\CCC$, we say that $\Aa \in \CCC$ is a \emph{minimal
  model} of $\phi$ in $\CCC$ if $\Aa\models\phi$ and for every proper
substructure $\Bb \subseteq \Aa$ such that $\Bb\in \CCC$, $\Bb
\not\models \phi$.  The following lemma is established by an easy
argument sketched in~\cite{ADK06}.
\begin{lemma}\label{lem:minimal}
  Let $\CCC$ be a class of finite structures closed under taking
  substructures and let $\phi$ be a sentence that is preserved under
  homomorphisms on $\CCC$.  Then the following are equivalent:
  \begin{enumerate}
  \item $\phi$ has finitely many minimal models in $\CCC$.
  \item $\phi$ is equivalent on $\CCC$ to an existential-positive
    sentence. 
  \end{enumerate}
\end{lemma}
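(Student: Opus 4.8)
The plan is to prove the two implications directly, both via the standard correspondence between conjunctions of atoms and homomorphisms. To a conjunction of atoms $\theta(x_1,\dots,x_n)$ one associates its \emph{canonical structure} $\Aa_\theta$, with universe $\{x_1,\dots,x_n\}$ and relations given exactly by the atoms of $\theta$, so that for every structure $\Bb$ we have $\Bb\models\exists x_1\cdots\exists x_n\,\theta$ iff there is a homomorphism $\Aa_\theta\to\Bb$; conversely, to a finite structure $\Aa$ one associates its \emph{canonical conjunctive query} $\bigwedge\{R(x_{a_1},\dots,x_{a_r}):(a_1,\dots,a_r)\in R^{\Aa}\}$, one variable per element of $A$. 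Since equalities can be eliminated from existential-positive formulas and every existential-positive sentence can be put in disjunctive normal form, it suffices to argue with finite disjunctions of sentences $\exists\bar x\,\theta$ of the above shape.

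For $(2)\Rightarrow(1)$: suppose $\phi$ is equivalent on $\CCC$ to $\psi=\bigvee_{i=1}^m\exists\bar x_i\,\theta_i$. If $\Aa$ is a minimal model of $\phi$ in $\CCC$, then $\Aa\models\psi$, so some $\Aa_{\theta_i}$ maps homomorphically onto a substructure $\Bb\subseteq\Aa$. Then $\Bb\in\CCC$ by closure under substructures; $\Bb\models\psi$ since $\Bb$ is a homomorphic image of $\Aa_{\theta_i}$ and existential-positive sentences are preserved under homomorphisms; hence $\Bb\models\phi$; and so $\Bb=\Aa$ by minimality. Thus every minimal model of $\phi$ in $\CCC$ is, up to isomorphism, a homomorphic image of one of the finitely many structures $\Aa_{\theta_1},\dots,\Aa_{\theta_m}$, and since a finite structure has only finitely many homomorphic images up to isomorphism (one for each partition of its finite universe), $\phi$ has finitely many minimal models in $\CCC$.

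For $(1)\Rightarrow(2)$: let $\Aa_1,\dots,\Aa_k$ be representatives, up to isomorphism, of the minimal models of $\phi$ in $\CCC$, let $\theta_j$ be the canonical conjunctive query of $\Aa_j$, and set $\psi=\bigvee_{j=1}^k\theta_j$. I claim $\phi$ and $\psi$ are equivalent on $\CCC$. One direction is immediate: if $\Bb\in\CCC$ and $\Bb\models\theta_j$ then $\Aa_j\to\Bb$, and since $\Aa_j\models\phi$, both structures are in $\CCC$, and $\phi$ is preserved under homomorphisms on $\CCC$, we get $\Bb\models\phi$. Conversely, suppose $\Bb\in\CCC$ and $\Bb\models\phi$. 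The substructures of $\Bb$ satisfying $\phi$ form a finite nonempty family (all of whose members lie in $\CCC$), so it has a $\subseteq$-minimal element $\Bb^*$; any proper substructure of $\Bb^*$ is a substructure of $\Bb$ in $\CCC$, hence fails $\phi$ by minimality of $\Bb^*$, so $\Bb^*$ is a minimal model of $\phi$ in $\CCC$ and thus $\Bb^*\iso\Aa_j$ for some $j$. The inclusion $\Bb^*\subseteq\Bb$ is a homomorphism $\Aa_j\to\Bb$, so $\Bb\models\theta_j$, hence $\Bb\models\psi$.

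The lemma is routine, and the one point that needs a word of care is exactly the extraction of $\Bb^*$: finiteness is what guarantees that the family of $\phi$-satisfying substructures of $\Bb$ has a $\subseteq$-minimal element and that this element is genuinely a minimal model in $\CCC$; finiteness is also what bounds the number of homomorphic images in the other direction. (In the degenerate case where $\phi$ has no model in $\CCC$ it has zero minimal models and is equivalent on $\CCC$ to the empty disjunction, read as $\bot$ — or one simply restricts to the satisfiable case.)
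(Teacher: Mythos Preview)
Your proof is correct and is essentially the standard argument. The paper itself does not give a proof of this lemma at all; it simply states that it ``is established by an easy argument sketched in~\cite{ADK06}'', and the argument sketched there is precisely the one you wrote out: canonical conjunctive queries for the finitely many minimal models in one direction, and bounding minimal models by homomorphic images of the canonical structures of the disjuncts in the other. So there is nothing to compare---you have supplied the omitted proof.
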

The main consequence of this lemma is that in order to establish that
$\CCC$ has the homomorphism preservation property, it suffices to
establish an upper bound on the size of the minimal models.  To be
precise, we aim to prove that for any $\phi$ there is an $N$ such that
no minimal model of $\phi$ is larger than $N$.

The quantifier rank of a first-order formula $\phi$ is just the
maximal depth of nesting of quantifiers in $\phi$.
For every integer $r \geq 0$, let $\delta(x,y) \leq r$ denote 
the first-order formula expressing that the distance
between $x$ and $y$ in the Gaifman graph is at most $r$.
Let $\delta(x,y) > r$ denote the negation of this formula.
Note that the quantifier rank of $\delta(x,y) \leq r$ is
bounded by $r$. A \emph{basic local sentence} is a sentence of
the form
\begin{equation}
\exists x_1 \cdots \exists x_n
\left({\bigwedge_{i\not=j} \delta(x_i,x_j) > 2r
\wedge \bigwedge_i \psi^{N_r(x_i)}(x_i)}\right),
\label{equation:basiclocal}
\end{equation}
where $\psi$ is a first-order formula with one
free variable. Here, $\psi^{N_r(x_i)}(x_i)$ stands
for the relativization of $\psi$ to $N_r(x_i)$; that is, 
the subformulas of $\psi$ of the form $\exists x \theta$ are 
replaced by $\exists x(\delta(x,x_i) \leq r \wedge \theta)$,
and the subformulas of the form $\forall x \theta$ are
replaced by $\forall x (\delta(x,x_i) \leq r \rightarrow \theta)$.
The \emph{locality radius} of a basic local sentence
is $r$. Its \emph{width} is $n$.  The formula $\psi$ is called the
\emph{local condition}.

The main value of basic local sentences is that they form a
building block for first-order logic.  This follows from Gaifman's
Theorem (for a proof, see, for example, \cite[Theorem~2.5.1]{EF99}),
which states that every first-order sentence is equivalent to a
Boolean combination of basic local sentences.  We will need a refined
version of this, which takes account of quantifier rank.  The
following statement follows immediately from the proof given
in~\cite{EF99}. 
\begin{theorem}[Gaifman]\label{thm:gaifman}
  Every first-order sentence $\phi$ of quantifier rank at most $q$ is
  equivalent to a Boolean combination of basic local sentences of
  locality radius at most $7^q$.
\end{theorem}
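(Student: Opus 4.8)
The plan is to observe that this is Gaifman's locality theorem, and that the only novelty in the present formulation is the explicit bound $7^q$ on the locality radius; since that bound is already visible in the inductive proof of~\cite[Theorem~2.5.1]{EF99}, I would recall the structure of that proof and point out where the radius is controlled, rather than reprove it from scratch.

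Concretely, I would prove by induction on the quantifier rank $q$ the following stronger statement about formulas with free variables: every formula $\varphi(\mb x)$ of quantifier rank at most $q$ is logically equivalent to a Boolean combination of (i)~basic local sentences of locality radius at most $r_q$ and (ii)~formulas that are $r_q$-\emph{local} around $\mb x$, meaning equivalent to their own relativisation $\psi^{N_{r_q}(\mb x)}$ to the $r_q$-neighbourhood of $\mb x$; here $r_0 = 0$ and $r_q$ obeys a linear recurrence $r_q \le c\,r_{q-1}+c$ with a small constant $c$, and one checks that the resulting values are bounded by $7^q$. For a \emph{sentence}, clause~(ii) degenerates, so this yields the theorem. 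The base case $q = 0$ is immediate, since a quantifier-free formula is $0$-local, and the cases of Boolean connectives are immediate with no growth in the radius. Throughout, $\delta(y,\mb x)$ is read as the distance from $y$ to the nearest component of $\mb x$, and $N_r(\mb x)$ as the union of the $r$-neighbourhoods of the components.

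The real content is the step for $\varphi(\mb x) = \exists y\,\chi(\mb x,y)$. First I would apply the induction hypothesis to $\chi$; the basic local sentences it contributes do not mention $y$ and pass through the quantifier, and after putting the remaining $r_{q-1}$-local part into disjunctive normal form and distributing $\exists y$ over the disjunction, it suffices to handle $\exists y\,\beta^{N_{r_{q-1}}(\mb x,y)}(\mb x,y)$ for a single $\beta$ of quantifier rank at most $q-1$. Then I would split on the distance from $y$ to $\mb x$. If $\delta(y,\mb x) \le 2r_{q-1}+1$ then $N_{r_{q-1}}(\mb x,y) \subseteq N_{3r_{q-1}+1}(\mb x)$, so this case contributes a formula $(3r_{q-1}+1)$-local around $\mb x$. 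If $\delta(y,\mb x) > 2r_{q-1}+1$ then $N_{r_{q-1}}(y)$ is disjoint from, and has no edges to, $N_{r_{q-1}}(\mb x)$, so $N_{r_{q-1}}(\mb x,y) = N_{r_{q-1}}(\mb x) \oplus N_{r_{q-1}}(y)$; invoking the Feferman--Vaught theorem for disjoint unions (with $\mb x$ as parameters on one side and $y$ on the other), $\beta^{N_{r_{q-1}}(\mb x,y)}$ becomes a Boolean combination of conjunctions $\mu^{N_{r_{q-1}}(\mb x)}(\mb x) \wedge \nu^{N_{r_{q-1}}(y)}(y)$ with $\mu,\nu$ of quantifier rank at most $q-1$. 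Distributing $\exists y$ again and pulling the $\mb x$-parts out, the task reduces, modulo formulas already $r_{q-1}$-local around $\mb x$, to normalising sentences-with-parameters of the form $\exists y\,(\delta(y,\mb x) > 2r_{q-1}+1 \wedge \gamma(y))$ with $\gamma(y)$ an $r_{q-1}$-local formula in the single free variable $y$.

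For this last point I would use a scattered-set and pigeonhole argument: ``there are $m$ elements, pairwise at Gaifman-distance more than $2(2r_{q-1}+1)$, each satisfying $\gamma$'' is a basic local sentence $\sigma_m$ of radius $2r_{q-1}+1$; if $\sigma_{k+1}$ holds, where $k$ is the length of $\mb x$, then, since each component of $\mb x$ can be near at most one of $k+1$ such scattered witnesses, some witness is far from all of $\mb x$ and the formula holds, while if $\sigma_{k+1}$ fails then a maximal scattered family of $\gamma$-witnesses has at most $k$ members and every $\gamma$-witness lies within a bounded radius of it, which confines the search for a witness far from $\mb x$ to a bounded neighbourhood of $\mb x$ and so reduces the formula to one $r$-local around $\mb x$ (for a slightly larger $r$) together with the sentences $\sigma_1,\dots,\sigma_k$. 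Collecting the three contributions---the ``near'' part, the radius-$(2r_{q-1}+1)$ basic local sentences, and the residual local-in-$\mb x$ part---each has radius a fixed multiple of $r_{q-1}$ plus a constant, and the induced recurrence is bounded by $r_q \le 7^q$. I expect this closing scattered-set step in the ``far'' case, and the care needed to verify that the accumulated recurrence stays below $7^q$, to be the main obstacle; everything else is routine bookkeeping, and the argument is exactly that of~\cite{EF99}.
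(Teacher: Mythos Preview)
The paper does not supply its own proof of this theorem; it simply records that the stated bound ``follows immediately from the proof given in~\cite{EF99}'' and then notes that all that matters for later use is that the bound on the locality radius is uniform in the signature. Your proposal therefore already goes further than the paper does: you sketch the actual inductive argument from Ebbinghaus--Flum, with the near/far split at the existential step, the Feferman--Vaught decomposition of $\beta^{N_{r_{q-1}}(\mb x,y)}$ in the far case, and the scattered-witness pigeonhole to handle the residual ``there is a $\gamma$-witness far from $\mb x$'' formula. That is the standard proof, and your outline tracks it correctly. The one place where your wording is slightly loose is the final step: when $\sigma_{k+1}$ fails, the maximal scattered family of $\gamma$-witnesses need not lie near $\mb x$, so the reduction is not literally that the search is ``confined to a bounded neighbourhood of $\mb x$''; rather, one compares the global count encoded by the $\sigma_m$ against a local-in-$\mb x$ count of scattered $\gamma$-witnesses inside a suitable neighbourhood of $\mb x$, and a discrepancy forces a far witness. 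You yourself flag this step as the delicate one, and with that correction the sketch is sound and matches the cited source.
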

Indeed, a better bound than $7^q$ on the locality radius is possible, but the exact value of the bound will not concern us here.  
It is important, however, that the upper bound  does
not depend on the signature $\sigma$.

\subsection{Graphs}

We are interested in classes of finite structures $\CCC$ defined by a
graph-theoretic restriction on their Gaifman graphs.  In order to
define these restrictions, we introduce some graph theoretic
concepts.  For further details on graph minors, the reader is referred
to~\cite{Die05}.  For a graph $\mb{G}$, we often write $V^{\mb{G}}$
for the set of its vertices and $E^{\mb{G}}$ for the set of its
edges.  For $A \subseteq V^{\mb{G}}$, we write $\mb{G}[A]$ to denote
the subgraph of $\mb{G}$ induced by the set of vertices $A$.

We say that a graph ${\mb{G}}$ is a \emph{minor} of $\mb{H}$ (written
$\mb{G} \minor \mb{H}$) if
${\mb{G}}$ can be obtained from a subgraph of $\mb{H}$ by
contracting edges. The contraction of an edge $(u,v)$ consists in
replacing its two endpoints with a new vertex $w$ whose neighbours are all nodes that were neighbours of either $u$ or $v$.  An equivalent characterization (see \cite{Die05})
states that $\mb{G}$ is a minor of $\mb{H}$ if there is a map that
associates to each vertex $v$ of $\mb{G}$ a non-empty
\emph{connected} subgraph $\mb{H}_v$ of $\mb{H}$ such that
$\mb{H}_u$ and $\mb{H}_v$ are disjoint for $u \neq v$ and if there
is an edge between $u$ and $v$ in $\mb{G}$ then there is an edge in
$\mb{H}$ between some node in $\mb{H}_u$ and some node in
$\mb{H}_v$.  The subgraphs $\mb{H}_v$ are called \emph{branch sets}. 

We say that a class $\CCC$ of finite graphs \emph{excludes $\mb{G}$
as a minor} if, for every $\mb{H}$ in $\CCC$, $\mb{G} \not\minor
\mb{H}$.  We say that $\CCC$ \emph{excludes a minor} if there is some
graph $\mb{G}$ such that $\CCC$ excludes $\mb{G}$ as a minor.  Note
that if $\mb{G}$ is a graph on $n$ vertices and $\mb{K}_n$ is the
clique on $n$ vertices, then $\mb{G} \minor \mb{K}_n$.  Thus, if
$\CCC$ excludes a minor, then there is an $n$ such that $\CCC$
excludes $\mb{K}_n$ as a minor.  Among classes of graphs that exclude
a minor are the class of planar graphs, or more generally, the class
of graphs embeddable into any given fixed surface.

The notion of graph classes with locally excluded minors is introduced
in~\cite{DGK07}.  We say that a class $\CCC$ \emph{locally excludes
  minors} if there is a function $f: \nats \ra \nats$ such that for
each $\mb{G}$ in $\CCC$ and each vertex $v$ in $\mb{G}$,
$\mb{K}_{f(r)} \not\minor N_r^{\mb{G}}(v)$.  That is, for every $r$,
the class of graphs $\CCC_r$, formed from $\CCC$ by taking the
neighbourhoods of radius $r$ around all vertices of graphs in $\CCC$,
excludes a minor.

Finally, we define classes of bounded expansion, as introduced by
Ne\v{s}et\v{r}il and Ossona de Mendez~\cite{NOdM05}.   We say that
$\mb{G}$ is a \emph{minor at depth $r$} of $\mb{H}$ (and write $\mb{G}
\minor_r \mb{H}$)  if $\mb{G}\minor \mb{H}$ and this is witnessed 
by a collection of
branch sets $\{ \mb{H}_v \mid v \in V^{\mb{G}} \}$, 
each of which is contained in a neighbourhood of $\mb{H}$ of radius
$r$.  That is, for each $v \in V^{\mb{G}}$, there is a $w \in
V^{\mb{H}}$ such that $\mb{H}_v \subseteq N_r^{\mb{H}}(w)$.  For any
graph $\mb{H}$, the \emph{greatest reduced average density} (or
\emph{grad}) \emph{of radius $r$} of $\mb{H}$, written
$\grad_r(\mb{H})$  is defined as
$$ \grad_r(\mb{H}) = \max \big\{ \frac{|E^{\mb{G}}|}{|V^{\mb{G}}|}
\mid \mb{G} \minor_r \mb{H} \big\}.$$
In other words, $\grad_r(\mb{H})$ is half the maximum average degree
that occurs among minors of $\mb{H}$ of depth $r$.  In particular, if
$d(\mb{G})$ denotes the average degree of $\mb{G}$, then
$\grad_0(\mb{H}) = \max \big\{ \frac{1}{2} d(\mb{G}) \mid \mb{G}
\subseteq \mb{H} \big\}.$

A class of graphs $\CCC$ is said to be of \emph{bounded expansion} if
there is a function  $f: \nats \ra \nats$ such that for every graph
$\mb{G}$ in $\CCC$, $\grad_r(\mb{G}) \leq f(r)$.  It is known that for
every $n$, any graph with average degree $10n^2$ contains $\mb{K}_n$
as a minor (see~\cite[Theorem~7.2.1]{Die05}.  It follows immediately
that if $\CCC$ excludes $\mb{K}_n$ as a minor, it has bounded
expansion.  Indeed, the constant function $f(r) = 10n^2$ witnesses this.

Any class $\CCC$ that excludes a minor both has bounded expansion and
locally excludes minors.  However, the last two restrictions are known
to be incomparable in the sense that there are classes $\CCC$ that
locally exclude minors but are not of bounded expansion and vice versa
(see~\cite{DGK07}).  Another condition on a class $\CCC$, considered
in~\cite{ADK06} is that it has \emph{bounded degree}.  That is to say
that there is a constant $d$ such that every vertex in every graph in
$\CCC$ has degree at most $d$.  This restriction is incomparable with
the requirement that $\CCC$ excludes a minor but again, it is
immediate that any class of bounded degree both locally excludes
minors and has bounded expansion.  See~\cite{Daw07-mfcs} for a map of
these various conditions and implications between them.

\subsection{Homomorphism Preservation Theorems}

In~\cite{ADK06}, the homomorphism preservation property is established
for a number of classes of structures, based on certain combinatorial
properties that were called \emph{wide} and \emph{almost wide}
in~\cite{ADG08}.  In the following, when we talk of a class of finite
structures $\CCC$ satisfying a graph-theoretic restriction, such as
excluding a minor, we mean that the collection of Gaifman graphs
$\GGG(\Aa)$ of structures $\Aa$ in $\CCC$ satisfies the condition.

\begin{definition}\label{def:wide}
   A set of elements $B$
in a $\sigma$-structure ${\Aa}$ is \emph{$r$-scattered} if
for every pair of distinct $a,b \in B$ we have
$N_r^{\Aa}(a) \cap N_r^{\Aa}(b) = \emptyset$.

We say that a class of finite $\sigma$-structures $\mathcal{C}$ 
is \emph{wide} if for every $r$ and $m$ there exists an $N$ such that
every structure in $\mathcal{C}$ of size at least $N$ contains an
$r$-scattered set of size $m$.
\end{definition}
It is easy to see that if $\CCC$ has bounded degree, then it is wide.
Indeed, Ne\v{s}et\v{r}il and Ossona de
Mendez~\cite{NOdM08} note that for a class $\CCC$ 
that is closed under taking substructures, $\CCC$ is wide if, and only if, it has bounded degree.

\begin{definition}\label{def:almost-wide}
  A class of finite $\sigma$-structures $\mathcal{C}$ is \emph{almost
    wide with margin $k$} if for every $r$ and $m$ there exists an $N$
  such that every structure $\Aa$ with at least $N$ elements in
  $\mathcal{C}$ contains a set $B$ with at most $k$ elements such that
  $\GGG(\Aa)[A\setminus B]$ contains an $r$-scattered set of size $m$.

  We say that $\CCC$ is \emph{almost wide} if there is some $k$ such
  that it is almost wide with margin $k$.
\end{definition}
An example is the class of acyclic graphs, which is not wide (as we
have arbitrarily large trees where the distance between any two
vertices is $2$) but is almost wide with margin $1$.  More generally,
it is shown in~\cite{ADK06} that if $\CCC$ excludes $\mb{K}_n$ as a
minor, then $\CCC$ is almost wide with margin $n-2$.  A
characterisation of almost-wide classes that are closed under
subgraphs is given in~\cite{NOdM08}.

A theorem of~\cite{ADK06} shows that almost wideness, along with some
natural closure properties of a class $\CCC$ is sufficient to
guarantee the homomorphism preservation property.  
\begin{theorem}[\cite{ADK06}]\label{thm:almost-wide}
  Any class $\CCC$ of finite $\sigma$-structures that is almost wide
  and is closed under taking substructures and disjoint unions of
  structures has the homomorphism preservation property.
\end{theorem}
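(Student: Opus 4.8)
The plan is to reduce the statement, via Lemma~\ref{lem:minimal}, to a bound on the size of minimal models, and then to obtain such a bound by combining Gaifman's theorem with the density argument of Ajtai and Gurevich, fed by the almost-wideness hypothesis.

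Fix a sentence $\phi$ preserved under homomorphisms on $\CCC$, with quantifier rank $q$. By Lemma~\ref{lem:minimal} it suffices to produce a number $N$, depending only on $\phi$ and $\CCC$, such that no minimal model of $\phi$ in $\CCC$ has $N$ or more elements. One useful reformulation: if a minimal model $\Aa$ admitted a homomorphism to a proper substructure $\Bb\subsetneq\Aa$, then $\Bb\models\phi$ (preservation, since $\Aa\models\phi$) and $\Bb\in\CCC$ (substructure closure), contradicting minimality; so it is enough to show that every member of $\CCC$ with at least $N$ elements admits a homomorphism onto some proper substructure. By Theorem~\ref{thm:gaifman} I replace $\phi$ by an equivalent Boolean combination $\Phi$ of basic local sentences of locality radius at most $r\defeq 7^q$; let $n$ be the largest width occurring. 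What matters about this normal form is that satisfaction of $\Phi$ in a structure $\Bb$ is governed purely by local data: whether the local condition $\psi$ of an occurring basic local sentence holds at $b$ relative to $N_r^{\Bb}(b)$ depends only on the first-order theory of bounded quantifier rank of the rooted neighbourhood $(N_r^{\Bb}(b),b)$, of which there are finitely many; so $\Bb\models\Phi$ is determined by which of these finitely many neighbourhood types are realised in $\Bb$, how many times each (counted up to the threshold $n$), and at what mutual Gaifman distances, and is therefore insensitive to surgery leaving this data intact.

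Now suppose $\Aa\in\CCC$ is a minimal model with $|A|$ exceeding a threshold fixed below, and let $k$ be a margin witnessing that $\CCC$ is almost wide. Apply the definition of almost-wideness with a radius $\rho$ large compared with $r$ and a size parameter $m$ large compared with $n$, $k$, $|\sigma|$ and the number of relevant neighbourhood types: this gives a set $B\subseteq A$ with $|B|\le k$ and a set $S$ with $|S|=m$ that is $\rho$-scattered in $\GGG(\Aa)[A\setminus B]$. Picture $B$ as a small collection of ``hubs'': after $B$ is deleted, the radius-$\rho$ balls around the points of $S$ become pairwise disjoint and mutually non-adjacent, so the chunks of $\Aa$ carried by these balls can interact with each other, and with the remainder of $\Aa$, only through $B$. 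A pigeonhole argument on the behaviour of a chunk --- the isomorphism type of the rooted ball $N_r(a)$ computed in $\GGG(\Aa)[A\setminus B]$ together with the way that ball attaches to $B$ (which relations, in which coordinates, join hub elements to ball elements) --- extracts a subset $S'\subseteq S$ of size greater than $n$ on which this behaviour is constant. Keeping one element $a^{*}\in S'$ and the chunk around it, delete from $\Aa$ the chunks around the remaining elements of $S'$; this yields a proper substructure $\Bb\subsetneq\Aa$, and $\Bb\in\CCC$ by substructure closure. The map that folds each deleted chunk onto the surviving one according to their common behaviour, and fixes everything else, should be a homomorphism $\Aa\to\Bb$, so $\Bb\models\Phi$, i.e.\ $\Bb\models\phi$ --- contradicting minimality. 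Tracing the inequalities, the threshold $N$ comes out as a function of $r$, $n$ and the number of neighbourhood types (all determined by $\phi$) and of $k$ (determined by $\CCC$), which is the bound we wanted.

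The step I expect to be the genuine obstacle is the last construction: making the folding map an honest homomorphism. Deleting a radius-$r$ ball leaves a shell of vertices just outside it that retained edges into the deleted region, and under the fold those edges have to land on edges of $\Bb$; so one must choose the deleted regions and calibrate $\rho$ so that, around each surplus element of $S'$, the removed part of $\Aa$ touches the rest only through $B$ and through vertices that are themselves mapped consistently, and one must make ``same behaviour'' capture precisely the resulting interface conditions --- all while keeping every pigeonhole count bounded by a function of $\phi$ and $k$ alone. It is here, in compensating for the removed chunks and the interfaces they leave behind, that closure under disjoint unions --- not merely under substructures --- is used. That bookkeeping is the substance of the argument; the reduction to minimal models, the passage to normal form, and the final arithmetic are routine.
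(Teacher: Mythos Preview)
Your reduction in the second paragraph is too strong, and the folding construction cannot be carried through. You claim it suffices to show that every structure in $\CCC$ with at least $N$ elements admits a homomorphism onto a proper substructure of itself---i.e.\ that $\CCC$ contains no cores of size $\geq N$. But this is false already for very tame almost-wide classes: take $\sigma$ to consist of a single binary relation and let $\CCC$ be the closure under substructures and disjoint unions of the directed paths $a_1 \to a_2 \to \cdots \to a_n$. This class has degree at most $2$, hence is wide, yet each directed path is a core (any homomorphic image must contain a directed walk of length $n-1$, which no proper substructure does). So no argument that ignores $\phi$ after fixing the parameters can produce the retraction you want.

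The folding step itself exhibits the matching defect. For the fold to be a homomorphism you need the chunks around the elements of $S'$ to be \emph{isomorphic} as rooted structures together with their attachments to $B$; agreeing on a first-order type of bounded quantifier rank does not give you a map. But the number of isomorphism types of radius-$r$ balls in an almost-wide class is unbounded (there is no degree bound), so the pigeonhole on ``behaviour'' cannot yield an $m$---and hence an $N$---depending only on $\phi$ and $k$. You located the right pressure point, but the obstacle is not bookkeeping: it is structural and cannot be removed along these lines.

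The paper's route, following Ajtai--Gurevich, sidesteps both issues. One does not build a retraction at all. Having found two scattered points $c_i,c_j$ that agree on the finitely many local formulas $\theta_l$ arising from the Gaifman decomposition, one removes a \emph{single tuple} through $c_i$ to obtain a proper substructure $\Bb\subsetneq\Aa$, and then forms the disjoint unions $\Bb_n$ (of $n$ copies of $\Bb$, where $n$ is the maximal width) and $\Aa_n \defeq \Aa \oplus \Bb_n$. One shows directly that $\Aa_n$ and $\Bb_n$ satisfy the same basic local sentences in the decomposition of $\phi$---this needs only agreement of the two scattered points on finitely many first-order local conditions, not isomorphism of their neighbourhoods. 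The homomorphisms $\Aa \to \Aa_n$ (inclusion) and $\Bb_n \to \Bb$ (fold the copies) then transport $\phi$ from $\Aa$ to $\Bb$. This is exactly where closure under disjoint unions enters, and it is not a patch on a folding map but the mechanism that replaces folding altogether. For $k>0$ the reduction to this case goes through the plebeian-companion translation, which moves the elements of $B$ into the signature rather than into the geometry.
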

This is proved using a lemma of
Ajtai and Gurevich which we review in Section~\ref{sec:ajtai-g}.
Thus, as long as $\CCC$ is closed under substructures and disjoint
unions, if it has bounded degree, bounded treewidth or excludes a
minor, it has the homomorphism preservation property.  An open
question posed in~\cite{ADK06} was whether the same could be proved in
the case where $\CCC$ has \emph{bounded local treewidth}.  We will not
define this notion formally here but only note that any class of
bounded local treewidth also locally excludes minors.  Thus, by
establishing the homomorphism preservation property for classes that
locally exclude minors, we settle the open question.

\section{Quasi-Wide Classes of Structures}\label{sec:ajtai-g}
By Theorem~\ref{thm:almost-wide}, the homomorphism preservation
property holds for classes of structures which are almost wide and
closed under taking substructures and disjoint unions.  Unfortunately,
knowing that a class $\CCC$ has bounded expansion or that it locally
excludes minors is not sufficient to establish that it is almost
wide.  Indeed, it follows from the characterisation of almost-wide
classes given in~\cite{NOdM08} that there is a class of bounded
expansion and that locally excludes minors but that is not almost wide.
Our aim in this section is to show that the condition of almost
wideness can be relaxed to a weaker condition that is satisfied by the
classes we consider.  We proceed to define this condition.
\begin{definition}\label{def:quasi-wide}
  Let $f: \nats \ra \nats$ be a function.  A class of finite
  $\sigma$-structures $\mathcal{C}$ is \emph{quasi-wide with margin
    $f$} if for every $r$ and $m$ there exists an $N$ such that every
  structure $\Aa$ with at least $N$ elements in $\mathcal{C}$ contains
  a set $B$ with at most $f(r)$ elements such that
  $\GGG(\Aa)[A\setminus B]$ contains an $r$-scattered set of size $m$.

  We say that $\CCC$ is \emph{quasi-wide} if there is some $f$ such
  that $\CCC$ is quasi-wide with margin $f$.
\end{definition}
In other words, unlike in the definition of almost wide classes, the
number of elements we need to remove to guarantee a large scattered
set in a large enough structure $\Aa$ can be allowed to depend on the radius $r$ of
the neighbourhoods we consider.

Theorem~\ref{thm:almost-wide} is obtained from the following lemma
proved by Ajtai and Gurevich~\cite{AG94} and the observation that the
only constructions used in the proof involve taking substructures and
disjoint unions.  We sketch an outline of the proof below.
\begin{lemma}[Ajtai-Gurevich]\label{lem:ajtai-g}
  For any sentence $\phi$ that is preserved under homomorphisms and
  any $k\in\nats$, there are $r, m \in \nats$ such that if $\Aa$ is a
  minimal model of $\phi$ and $B \subseteq A$ is a set of its elements
  with $|B| \leq k$, then $\GGG(\Aa)[A\setminus B]$ does not contain
  an $r$-scattered set of size $m$.
\end{lemma}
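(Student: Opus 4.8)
The plan is to reduce the statement to a contrapositive/compactness-style argument about minimal models, exploiting Gaifman locality together with the homomorphism-preservation hypothesis. Fix a sentence $\phi$ preserved under homomorphisms and fix $k \in \nats$. By Theorem~\ref{thm:gaifman}, $\phi$ is equivalent to a Boolean combination of basic local sentences of locality radius at most $7^q$, where $q$ is the quantifier rank of $\phi$; let $\rho = 7^q$ and let $n$ be the maximum width among the basic local sentences appearing in this combination. The key intuition is this: if a minimal model $\Aa$ of $\phi$, with a small set $B$ of at most $k$ elements removed, still contained a huge $r$-scattered set (for suitably chosen $r$ depending on $\rho$, $k$, $n$), then that scattered set would contain many elements whose $\rho$-neighborhoods in $\Aa$ are pairwise disjoint and disjoint from $B$ — and hence these neighborhoods are isomorphic, as substructures of $\Aa$, to the corresponding neighborhoods in $\Aa \setminus B$ as well. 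One could then delete one such neighborhood (or rather, the elements of $\Aa$ lying ``far'' from $B$ but inside one copy) to obtain a proper substructure $\Aa'$ that still satisfies $\phi$, contradicting minimality. The reason $\Aa' \models \phi$ is that $\Aa'$ and $\Aa$ agree on all the basic local sentences of radius $\le \rho$ in the Gaifman combination: removing one of many pairwise-far identical local ``gadgets'' changes neither the truth of any width-$n$ basic local sentence (there are still plenty of scattered witnesses left) nor, crucially, the existence of short paths elsewhere, so distances between surviving elements are unchanged.

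Concretely, I would proceed as follows. First, set the parameters: choose $r = \rho = 7^q$ for the scattered-set radius (possibly slightly larger, say $r = 3\rho$, to give room for the surgery), and choose $m$ large enough — roughly $m > (\text{number of basic local sentences}) \cdot n + k + 1$, so that after removing $k$ elements and after deleting one gadget we still have more than $n$ scattered witnesses available for every basic local sentence in the combination. Second, suppose for contradiction that some minimal model $\Aa$ of $\phi$ admits a $B$ with $|B| \le k$ such that $\GGG(\Aa)[A \setminus B]$ contains an $r$-scattered set $S$ of size $m$. Third, among the elements of $S$, discard those whose $\Aa$-distance to $B$ is at most $r$; since $|B| \le k$ and $S$ is $r$-scattered in $\Aa \setminus B$, at most $k$ elements of $S$ are within distance $r$ of $B$ in $\Aa$ (each $b \in B$ can be $r$-close to at most one element of $S$, else two of those elements would have intersecting $r$-neighborhoods in $\GGG(\Aa)[A\setminus B]$ — this needs a small check, since the intersecting path might pass through $b$, but with the margin built into $m$ we can afford a cruder bound). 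The remaining set $S'$ of size $\ge m - k$ consists of elements whose $r$-neighborhoods in $\Aa$ avoid $B$ entirely and are pairwise disjoint. Fourth, pick any $s \in S'$ and form $\Aa'$ by deleting from $\Aa$ all elements $a$ with $\Aa$-distance to $s$ strictly less than $r$ that lie in no tuple together with an element at distance $\ge r$ — more simply, delete the vertex set of an inner ball around $s$, being careful that this is a \emph{proper} substructure. Fifth, verify that $\Aa' \models \phi$: the Gaifman graph of $\Aa'$ differs from that of $\Aa$ only inside the ball around $s$; all distances between elements outside that ball are preserved (any shortest path through the ball could be rerouted or simply did not exist, by $r$-scatteredness of $S'$ and the radius bound $\rho \le r$); every basic local sentence of radius $\le \rho$ in the combination has the same truth value in $\Aa'$ as in $\Aa$, using the surviving scattered witnesses in $S' \setminus \{s\}$ and the fact that no new short connections are created. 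Hence $\Aa'$, a proper substructure of $\Aa$ satisfying $\phi$, contradicts minimality. (If $\Aa' \notin \CCC$ — but note the lemma as stated does not mention $\CCC$, so this issue does not arise here; minimality is relative to all substructures.)

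The main obstacle I expect is the fifth step: showing rigorously that deleting one local gadget does not change the truth value of the Boolean combination of basic local sentences. The subtlety is twofold. One must argue that \emph{distances are preserved} under the deletion — this is where the $r$-scatteredness of $S'$ with $r \ge \rho$ is used, guaranteeing that the deleted ball is ``private'' to $s$ and no shortest path between surviving elements is forced through it (a shortest path of length $\le 2\rho$ between two elements outside distance $\rho$ of $s$ cannot enter $N_\rho^\Aa(s)$). And one must handle \emph{negated} basic local sentences: if $\Aa \models \neg\chi$ for some basic local sentence $\chi$ of width $w \le n$, we need $\Aa' \models \neg\chi$, i.e.\ deleting a gadget does not \emph{create} a new scattered family of witnesses — this follows because $\Aa'$ is a substructure of $\Aa$, so any scattered witness configuration in $\Aa'$ lifts to one in $\Aa$ (relativized local conditions and distances only being ``harder'' to satisfy in a substructure; one checks that $\psi^{N_r(x)}$ is preserved from $\Aa'$ to $\Aa$ when the relevant $r$-neighborhood is unchanged). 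Getting the quantifier-rank bookkeeping exactly right — that $\rho = 7^q$ from Theorem~\ref{thm:gaifman} is the only radius that matters — and the counting in step three are the routine parts; the locality-preservation-under-surgery argument is the heart of the proof.
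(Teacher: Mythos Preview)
Your step five does not go through, and the gap is conceptual: you never actually use the hypothesis that $\phi$ is preserved under homomorphisms. You attempt to show directly that $\Aa$ and your surgically-obtained $\Aa'$ satisfy the same basic local sentences, but this is false in general. For a basic local sentence $\chi$ with local condition $\psi$, the witnesses $x_1,\ldots,x_n$ in $\Aa$ need not lie in (or near) your scattered set $S'$, so ``the surviving scattered witnesses in $S'\setminus\{s\}$'' have no reason to satisfy $\psi^{N_\rho(x)}(x)$; conversely, deleting a ball can increase distances and thereby \emph{create} new scattered witness tuples, so $\Aa\not\models\chi$ does not imply $\Aa'\not\models\chi$ either. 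Your parenthetical check that ``a shortest path of length $\le 2\rho$ between two elements outside distance $\rho$ of $s$ cannot enter $N_\rho^\Aa(s)$'' is also incorrect as stated (a path can enter a ball without passing through its centre, and the triangle inequality gives no useful lower bound in that case).

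The paper's proof avoids all of this by a different mechanism. For $k=0$ it removes only a \emph{single tuple} through one scattered point $c_i$ chosen (by pigeonhole on $m>2^s$) to have the same local type as another scattered point $c_j$; it does \emph{not} compare $\Aa$ and the resulting $\Bb$ directly, but rather compares the disjoint unions $\Aa_n:=\Aa\oplus\Bb^{\oplus n}$ and $\Bb_n:=\Bb^{\oplus n}$ and shows these agree on $\phi$. Homomorphism preservation is then invoked along the chain $\Aa\hookrightarrow\Aa_n$, $\Aa_n\equiv_\phi\Bb_n$, $\Bb_n\twoheadrightarrow\Bb$ to conclude $\Bb\models\phi$. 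The case $k>0$ is not handled by discarding scattered points near $B$ as you propose, but by a separate reduction via \emph{plebeian companions}: one passes to a structure $p\Aa_{\tup{a}}$ over an enriched signature whose Gaifman graph is exactly $\GGG(\Aa)[A\setminus B]$, together with a quantifier-rank-preserving translation $\phi\mapsto\hat\phi$, and then applies the $k=0$ case to $\hat\phi$.
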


Our aim here is to show that in the proof of Lemma~\ref{lem:ajtai-g},
the value of $r$ can be chosen independently of the value of $k$.
This will immediately allow us to extend Theorem~\ref{thm:almost-wide}
to quasi-wide classes of structures.  We proceed with an outline of the
proof of Ajtai and Gurevich. 

The first step in the proof is to prove it for the case when $k=0$.
Then, the general case is reduced to this special case.  So, suppose
$\phi$ is a sentence of quantifier rank $q$ that is preserved under
homomorphisms.  Let $\Sigma = \{\phi_1, \ldots, \phi_s \}$ be a
collection of basic local sentences (obtained by
Theorem~\ref{thm:gaifman}) such that $\phi$ is equivalent to a Boolean
combination of them.  For each $i$, let $t_i$ be the radius of
locality, $n_i$ the width and $\psi_i(x)$ the local condition of
$\phi_i$.  Also let $t = \max_i t_i$ and $n = \max_i n_i$.  We take $r
= 2t$ and $m = 2^s+1$.  For each $i$, we write $\theta_i(y)$ for the
following formula
$$ \exists x \big(\delta(x,y) \leq t_i \land \psi_i^{N_{t_i}(x)}(x)\big).$$

Suppose then that $\Aa$ is a model of $\phi$ that contains an
$r$-scattered set of size $m$.  We wish to show that $\Aa$ cannot be
minimal.  Suppose that $C = \{c_1,\ldots,c_m \}$ is the $r$-scattered
set.  Then, by definition $N_r^{\Aa}(c_i) \cap N_r^{\Aa}(c_j) =
\emptyset$ for $i\neq j$.  Furthermore, since $m > 2^s$, there are $i$
and $j$ with $i \neq j$ such that for all $l$, $\Aa\models
\theta_l[c_i]$ if, and only if, $\Aa \models \theta_l[c_j]$.  Let
$\Bb$ be the substructure of $\Aa$ obtained by removing some tuple
that includes $c_i$ from some relation $R$ of $\Aa$ (if there is no
such relation, then we can get a model of $\phi$ by removing the
element $c_i$, showing that $\Aa$ is not minimal in any case).
Finally, we take $\Bb_n$ to be the structure that is the disjoint union
of $n$ copies of $\Bb$ and $\Aa_n$ to be the structure that is the
disjoint union of $\Aa$ and $\Bb_n$.  Ajtai and Gurevich prove that
the structures $\Aa_n$ and $\Bb_n$ must agree on the sentence $\phi$.
Since $\phi$ is preserved under homomorphisms, and there are
homomorphisms from $\Aa$ to $\Aa_n$ and from $\Bb_n$ to $\Bb$, it
follows that if $\Aa$ is a model of $\phi$ so is $\Bb$.  Thus, since
$\Bb$ is a proper substructure of $\Aa$, the latter is not a minimal model of
$\phi$.

Note that, if $\CCC$ is a class of structures that is closed under
substructures and disjoint unions then, whenever it contains $\Aa$, it
also contains $\Bb$, $\Bb_n$ and $\Aa_n$.  Thus the above argument
showing that $\Aa$ is not minimal works in restriction to such a
class.  Note further that in the above argument establishing
Lemma~\ref{lem:ajtai-g} for $k=0$, the values of $r$ and $m$ depend on
$\phi$, but $r$ can be bounded above by $2\cdot 7^q$ where $q$ is the
quantifier rank of $\phi$, \emph{independently of the signature
  $\sigma$}.  A similar upper bound for $m$ is not obtained as this
depends on the number of inequivalent basic local sentences of a given
quantifier rank and locality radius that can be expressed and this, in
turn, depends on the signature.

The proof of Lemma~\ref{lem:ajtai-g} by Ajtai and Gurevich then
proceeds to reduce the case $k>0$ to the case $k=0$ by means of the
construction of what they call \emph{plebeian companions}.  That is, for every structure $\Aa$ and a tuple of elements $\tup{a} = (a_1,\ldots,a_k)$ from $\Aa$ we define a structure $p\Aa_{\tup{a}}$ called the \emph{plebeian companion} of $\Aa$.  This is a structure over a richer vocabulary than $\Aa$ and has the property that $\GGG(p\Aa_{\tup{a}}) \iso \GGG(\Aa)[A\setminus \tup{a}]$.  In particular, $p\Aa_{\tup{a}}$ contains an $r$-scattered set of $m$ elements if, and only if, removing the elements $a_1,\ldots,a_k$ from $\GGG(\Aa)$ creates such a set. 
Furthermore, Ajtai and Gurevich give a translation that takes a formula $\phi$ in the signature $\tau$ of $\Aa$ to a formula $\hat{\phi}$ in the signature $\tau'$ of $p\Aa_{\tup{a}}$ so that $\Aa\models \phi$ if, and only if, $p\Aa_{\tup{a}}\models \hat{\phi}$ and $\hat{\phi}$ is preserved under homomorphisms if $\phi$ is.  This then allows us to deduce Lemma~\ref{lem:ajtai-g} since if $\Aa$ is a model of $\phi$ and $B = \{a_1,\ldots,a_k\}$ a set of elements such that $ \GGG(\Aa)[A\setminus B]$ contains an $r$-scattered set of $m$ elements, we can note (from the case $k=0$) that $p\Aa_{\tup{a}}$ is not a minimal model of $\hat{\phi}$.  Moreover, from a proper submodel of the latter we can reconstruct a proper substructure of $\Aa$ that is a model of $\phi$ establishing that $\Aa$ is not minimal.

Our aim here is to show that in the translation of $\phi$ to $\hat{\phi}$, while the signature of $\hat{\phi}$ depends on the value of $k$, the quantifier rank is actually the same as that of $\phi$.  To this end, we give the translation in detail.

Fix a structure $\Aa$ in a relational signature $\tau$ and a tuple of elements $a_1,\ldots,a_k$ from $A$.  The signature $\tau'$ contains all the relation symbols in $\tau$.  In addition, for each relation symbol $R$ of arity $r$ in $\tau$ and each non-empty partial function $\mu: \{1,\ldots,r\} \pfunc \{a_1,\ldots,a_k\}$, $\tau'$ contains a new relation symbol $R_{\mu}$ whose arity is $r-j$ where $j$ is the number of elements of $\{1,\ldots,r\}$ on which $\mu$ is defined.  In particular, if $\mu$ is total, $r=j$ and $R_{\mu}$ is then a $0$-ary relation symbol. That is to say, it is a Boolean symbol that is interpreted as either true or false in any $\tau'$-structure.

The universe of $p\Aa_{\tup{a}}$ is obtained from that of $\Aa$ by
excluding the elements $a_1,\ldots,a_k$.  For each relation symbol $R$
in $\tau$, the interpretation of $R$ in $p\Aa_{\tup{a}}$ is the
restriction of $R^{\Aa}$ to the universe of $p\Aa_{\tup{a}}$. To
define the interpretation of $R_{\mu}$, let $\mb{b}$ be an $r-j$ tuple of
elements from $p\Aa_{\tup{a}}$.  Let $\mb{b}'$ be the $r$-tuple of
elements of $\mb{A}$ obtained from $\mb{b}$ by inserting in position
$i$ the element $\mu(i)$.  We say that $\mb{b} \in R_{\mu}^{p\Aa_{\tup{a}}}$
if, and only if, $\mb{b}' \in R^{\Aa}$.  In the special case that
$R_m$ is $0$-ary, we say that it is interpreted as true if, and only if,
the unique empty tuple is in $R_{\mu}$ by the above rule.

To describe the translation of $\phi$ to $\hat{\phi}$, we consider an
expansion of the signature $\tau$ with constants for the elements
$a_1,\ldots,a_k$ (we do not distinguish between the elements and the
constants that name them).  Note that these constants appear neither
in $\phi$ nor in $\hat{\phi}$ but they are useful in the inductive
definition of the translation.  So we proceed to define the
translation by induction on the structure of a formula $\phi$ in the
expanded signature.
\begin{itemize}
\item If $\phi$ is the atomic formula $R\tup{t}$ and the tuple of
  terms $\tup{t}$ does not contain any of the constants
  $a_1,\ldots,a_k$, then $\hat{\phi} \defeq \phi$.
\item If $\phi$ is the atomic formula $R\tup{t}$ and $\tup{t}$
  contains constants from $a_1,\ldots,a_k$, let $\mu$ be the partial
  function that takes $i$ to the constant appearing in position $i$ of
  $\tup{t}$.  Also, let $\tup{t}'$ be the tuple of variables obtained
  from $\tup{t}$ by removing the constants.  Then $\hat{\phi} \defeq
  R_{\mu}\tup{t}'$.
\item If $\phi$ is $\neg \psi$, then $\hat{\phi} \defeq \neg \hat{\psi}$ and if $\phi$ is $\psi_1 \land \psi_2$ then $\hat{\phi} \defeq \hat{\psi_1} \land \hat{\psi_2}$.
\item If $\phi$ is $\exists x \psi$ then $\hat{\phi} \defeq  \exists x \hat{\psi} \vee \bigvee_{i=1}^k \hat{\psi[x/a_i]}$.
\end{itemize}

It is clear from this translation that, while the signature of $\hat{\phi}$ depends on the value of $k$, its quantifier rank is the same as the quantifier rank of $\phi$.  Combining this with the fact that in the proof of Lemma~\ref{lem:ajtai-g} for the case $k=0$, we could bound the value of $r$ by $2\cdot 7^q$ independently of the signature of $\phi$, gives us the following strengthening of Lemma~\ref{lem:ajtai-g}.
\begin{lemma}\label{lem:stronger-ag}
  For any sentence $\phi$ of quantifier rank $q$ that is preserved under homomorphisms and
  any $k\in\nats$, there is an $m \in \nats$ such that if $\Aa$ is a
  minimal model of $\phi$ and $B \subseteq A$ is a set of its elements
  with $|B| \leq k$, then $\GGG(\Aa)[A\setminus B]$ does not contain
  a $2\cdot 7^q$-scattered set of size $m$.
\end{lemma}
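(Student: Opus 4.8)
The plan is to assemble two ingredients that the discussion above has already prepared. First, I will re-run the Ajtai--Gurevich proof of Lemma~\ref{lem:ajtai-g} in the base case $k=0$ while tracking the constants, so as to see that the locality radius it produces is exactly $2\cdot 7^q$, independent of the signature $\sigma$ and of how many basic local sentences are needed to express $\phi$. Second, I will pass from general $k$ to $k=0$ through the plebeian-companion reduction, invoking the explicit translation $\psi\mapsto\hat{\psi}$ given above, which keeps the quantifier rank equal to $q$ and hence carries the radius bound through unchanged; the witness number $m$ is then allowed to depend on $k$ (through the signature $\tau'$ of the companion), exactly as the statement permits.

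\textbf{Case $k=0$.} Let $\phi$ have quantifier rank $q$ and be preserved under homomorphisms. By Theorem~\ref{thm:gaifman} it is equivalent to a Boolean combination of basic local sentences $\phi_1,\dots,\phi_s$ of locality radii $t_l\le 7^q$; let $n_l$ and $\psi_l$ be the width and local condition of $\phi_l$, and set $t=\max_l t_l$, $n=\max_l n_l$. Take $r\defeq 2\cdot 7^q$ and $m\defeq 2^s+1$, noting $r\ge 2t$, so that any $r$-scattered set is $2t_l$-scattered for every $l$. Suppose for contradiction that $\Aa$ is a minimal model of $\phi$ containing an $r$-scattered set $C=\{c_1,\dots,c_m\}$. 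Writing $\theta_l(y)\defeq\exists x\bigl(\delta(x,y)\le t_l\wedge\psi_l^{N_{t_l}(x)}(x)\bigr)$ and applying the pigeonhole principle to the $2^s$ possible truth-assignments of $(\theta_1,\dots,\theta_s)$ on the $m>2^s$ points of $C$, we find $i\ne j$ with $\Aa\models\theta_l[c_i]\Leftrightarrow\Aa\models\theta_l[c_j]$ for all $l$. If $c_i$ appears in no tuple of any relation, delete it: the result is a proper substructure onto which $\Aa$ maps homomorphically, hence still a model of $\phi$, contradicting minimality. Otherwise let $\Bb\subsetneq\Aa$ be obtained by removing one tuple through $c_i$ from one relation; since this changes only edges and tuples inside $N_1^{\Aa}(c_i)$, which the $r$-scattering keeps disjoint from $N_{t_l}^{\Aa}(c_j)$, we get $N_{t_l}^{\Bb}(c_j)=N_{t_l}^{\Aa}(c_j)$ for every $l$, so $\Bb$ still realizes at $c_j$ whatever local conditions $\Aa$ realized at $c_i$. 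With $\Bb_n$ the disjoint union of $n$ copies of $\Bb$ and $\Aa_n\defeq\Aa\oplus\Bb_n$, the Ajtai--Gurevich argument then shows that $\Aa_n$ and $\Bb_n$ agree on every basic local sentence of width $\le n$ and radius $\le t$, hence on $\phi$; as there are homomorphisms $\Aa\to\Aa_n$ and $\Bb_n\to\Bb$ and $\phi$ is homomorphism-preserved, $\Aa\models\phi$ forces $\Bb\models\phi$, contradicting minimality. So $\GGG(\Aa)$ has no $2\cdot 7^q$-scattered set of size $m=2^s+1$, and the radius bound $2\cdot 7^q$ used nothing about $\sigma$.

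\textbf{Reduction to $k=0$.} Fix $k$ and let $\Aa$ in signature $\tau$ be a minimal model of $\phi$ with $B=\{a_1,\dots,a_k\}\subseteq A$. Form the plebeian companion $p\Aa_{\tup{a}}$ over $\tau'$ exactly as above. By construction $\GGG(p\Aa_{\tup{a}})\iso\GGG(\Aa)[A\setminus B]$; the translation satisfies $\Aa\models\psi\Leftrightarrow p\Aa_{\tup{a}}\models\hat{\psi}$ and maps homomorphism-preserved sentences to homomorphism-preserved sentences; and, reading off the inductive clauses — in particular $\hat{\phi}\defeq\exists x\,\hat{\psi}\vee\bigvee_{i=1}^{k}\hat{\psi[x/a_i]}$ for $\phi=\exists x\,\psi$ — the quantifier rank of $\hat{\phi}$ equals that of $\phi$, namely $q$. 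Apply the case $k=0$ to $\hat{\phi}$ to obtain $m$ (depending on $\hat{\phi}$, hence on $\phi$ and $k$) such that no minimal model of $\hat{\phi}$ has a $2\cdot 7^q$-scattered set of size $m$ in its Gaifman graph. If now $\GGG(\Aa)[A\setminus B]$ contained such a set, then so would $\GGG(p\Aa_{\tup{a}})$; since $p\Aa_{\tup{a}}\models\hat{\phi}$, it would fail to be a minimal model of $\hat{\phi}$, so it has a proper submodel of $\hat{\phi}$, and the companion construction is invertible enough that re-inserting $a_1,\dots,a_k$ with the corresponding tuples yields a proper substructure $\Bb\subsetneq\Aa$ with $\Bb\models\phi$ — contradicting minimality of $\Aa$. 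Hence $\GGG(\Aa)[A\setminus B]$ contains no $2\cdot 7^q$-scattered set of size $m$, which is the statement of the lemma.

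\textbf{Where the difficulty lies.} With Lemma~\ref{lem:ajtai-g} and the plebeian-companion machinery available, there is no new combinatorics; the entire content is the audit of constants. The one genuinely delicate ingredient, and the step I expect to require the most care, is the imported Ajtai--Gurevich claim that $\Aa_n$ and $\Bb_n$ agree on $\phi$: that argument needs both the width bound $n=\max_l n_l$ (so that up to $n_l$ pairwise-distant witnesses of a basic local sentence can be redistributed into distinct copies of $\Bb$) and the $2t$-scattering (so that the surgery near $c_i$ leaves the matching witness $c_j$ untouched) — it is precisely this that pins the locality radius at $r=2t\le 2\cdot 7^q$ and no larger. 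The corresponding check for the reduction is routine: one only verifies that the translation clause for $\exists$ introduces no extra quantifier nesting, so routing the problem through $p\Aa_{\tup{a}}$ costs nothing in radius even though it enlarges the signature.
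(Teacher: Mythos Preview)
Your proposal is correct and follows essentially the same approach as the paper's own argument: you track the radius bound $r=2t\le 2\cdot 7^q$ through the $k=0$ case of the Ajtai--Gurevich argument, observe that it depends only on the quantifier rank and not on the signature, and then invoke the plebeian-companion reduction, noting from the inductive translation clauses that $\hat{\phi}$ has the same quantifier rank as $\phi$ even though its signature grows with $k$. This is exactly the audit of constants the paper carries out in the discussion leading up to the lemma.
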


Since, by the observation in~\cite{ADK06}, this holds relativised to any class of structures $\CCC$ closed under substructures and disjoint unions, we obtain the following theorem.
\begin{theorem}\label{thm:quasi-wide}
  Any class $\CCC$ of structures that is quasi-wide and closed under substructures and disjoint unions has the homomorphism preservation property.
\end{theorem}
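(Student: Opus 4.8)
\textit{Proof proposal.} The plan is to combine the strengthened Ajtai--Gurevich bound (Lemma~\ref{lem:stronger-ag}) with the definition of quasi-wideness so as to cap the size of the minimal models, and then to invoke Lemma~\ref{lem:minimal}. Fix a class $\CCC$ that is quasi-wide with margin $f$ and closed under taking substructures and disjoint unions, and let $\phi$ be an arbitrary sentence that is preserved under homomorphisms on $\CCC$. By Lemma~\ref{lem:minimal}, to show that $\phi$ is equivalent on $\CCC$ to an existential-positive sentence it suffices to exhibit a bound $N$ on the number of elements of any minimal model of $\phi$ in $\CCC$; establishing the homomorphism preservation property then follows since $\phi$ was arbitrary.

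The key step is to read the relevant parameters off $\phi$ in the right order so that no circular dependency arises. First I would let $q$ be the quantifier rank of $\phi$ and set $r = 2\cdot 7^q$; this depends on $\phi$ alone. Next put $k = f(r)$, and apply Lemma~\ref{lem:stronger-ag} in its form relativised to $\CCC$ --- which is legitimate by the observation of~\cite{ADK06}, since $\CCC$ is closed under substructures and disjoint unions, so that the structures $\Bb$, $\Bb_n$, $\Aa_n$ constructed in the argument all remain in $\CCC$ --- to obtain an $m\in\nats$ such that no minimal model $\Aa$ of $\phi$ in $\CCC$ admits a set $B\subseteq A$ with $|B|\leq k$ for which $\GGG(\Aa)[A\setminus B]$ contains an $r$-scattered set of size $m$. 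Finally, feeding these $r$ and $m$ into the quasi-wideness of $\CCC$ yields an $N$ such that every structure $\Aa\in\CCC$ with at least $N$ elements contains some $B$ with $|B|\leq f(r)=k$ such that $\GGG(\Aa)[A\setminus B]$ \emph{does} contain an $r$-scattered set of size $m$. These two statements are incompatible for any model of size at least $N$, so every minimal model of $\phi$ in $\CCC$ has fewer than $N$ elements; as $\sigma$ is finite there are only finitely many structures of size below $N$ up to isomorphism, so $\phi$ has finitely many minimal models in $\CCC$, and Lemma~\ref{lem:minimal} completes the argument.

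I do not expect a serious obstacle: once Lemma~\ref{lem:stronger-ag} is available, the theorem is essentially bookkeeping. The only points that demand care are (i) the order of choice of parameters --- $r$ from $\phi$, then $k=f(r)$, then $m$ from Lemma~\ref{lem:stronger-ag}, and only then $N$ from quasi-wideness --- and (ii) checking that the relativised form of Lemma~\ref{lem:stronger-ag} genuinely applies, i.e.\ that the auxiliary structures built in the Ajtai--Gurevich reduction stay inside $\CCC$, which is exactly what closure under substructures and disjoint unions guarantees. It is precisely the fact that in Lemma~\ref{lem:stronger-ag} the radius $r$ was made to depend only on $q$ (and not on $k$, hence not on $f$) that makes the circular-looking chain of choices go through; this is why quasi-wideness, rather than the stronger almost-wideness, already suffices.
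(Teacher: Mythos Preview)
Your proposal is correct and follows essentially the same route as the paper's own proof: choose $r=2\cdot 7^q$ from the quantifier rank, set $k=f(r)$, obtain $m$ from the (relativised) Lemma~\ref{lem:stronger-ag}, then read off $N$ from quasi-wideness to bound the size of minimal models and invoke Lemma~\ref{lem:minimal}. Your additional remarks on the order of parameter choices and on why closure under substructures and disjoint unions suffices for the relativisation are accurate and make explicit exactly the points the paper leaves implicit.
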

\begin{proof}
Let $f : \nats \ra \nats$ be such that $\CCC$ is quasi-wide with margin $f$.
  Let $\phi$ be a sentence that is preserved under homomorphisms on $\CCC$.  By Lemma~\ref{lem:minimal} it suffices to prove that there is an $N$ such that no minimal model of $\phi$ in $\CCC$ has more than $N$ elements.

Write $q$ for the quantifier rank of $\phi$, let $r \defeq 2\cdot 7^q$ and let $k\defeq f(r)$.  Lemma~\ref{lem:stronger-ag} then gives us an $m$ such that in any minimal model of $\phi$ the removal of $k$ elements cannot create an $r$-scattered set of size $m$.  However, Definition~\ref{def:quasi-wide} ensures that there is an $N$ such that any structure in $\CCC$ with more than $N$ elements contains $k$ elements whose removal creates just such a scattered set.  We conclude that no minimal model of $\phi$ contains more than $N$ elements.
\end{proof}

\section{Bounded Expansion and Locally Excluded Minors}\label{sec:classes}

Our aim in this section is to show that classes of graphs that locally
exclude minors or that have bounded expansion are quasi-wide.  The
proof of this is an adaptation of the proof from~\cite{ADK06} that
classes of structures that exclude a minor are almost wide.  To be
precise, it is shown there that the following holds.
\begin{theorem}[\cite{ADK06}]\label{thm:minor}
  For any $k,r,m \in \nats$ there is an $N \in \nats$ such that if $\mb{G} = (V,E)$ is a graph with more than $N$ vertices then
\begin{enumerate}
\item either $\mb{K}_k \minor \mb{G}$; or
\item there is a set $B\subseteq V$ with $|B|\leq k-2$ such that $\mb{G}[V\setminus B]$ contains an $r$-scattered set of size $m$.
\end{enumerate}
\end{theorem}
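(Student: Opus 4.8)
The plan is to establish Theorem~\ref{thm:minor} by two easy reductions followed by an induction on $k$.

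\emph{Reductions.} First reduce to connected graphs: if $\mb{G}$ has at least $m$ connected components, choose one vertex from each of $m$ of them, obtaining an $r$-scattered set of size $m$ with $B=\emptyset$; otherwise $\mb{G}$ has fewer than $m$ components, so if $|V|>N$ one of them has more than $N/m$ vertices, and since a component is again $K_k$-minor-free it suffices to prove the statement for it. Next reduce to bounded diameter: run a breadth-first search from an arbitrary vertex $v_0$, with layers $L_0=\{v_0\},L_1,\dots,L_d$ where $d$ is the eccentricity of $v_0$. Since $d_\mb{G}(v_0,\cdot)$ is $1$-Lipschitz, any $u\in L_i$ and $w\in L_j$ satisfy $d_\mb{G}(u,w)\ge|i-j|$; hence, if $d\ge(2r+1)(m-1)$, picking one vertex from each of $L_0,L_{2r+1},L_{2(2r+1)},\dots,L_{(m-1)(2r+1)}$ yields an $r$-scattered set of size $m$ (consecutive choices differ by $2r+1>2r$ in layer), again with $B=\emptyset$. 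So we may assume $\mb{G}$ is connected of diameter less than $D:=(2r+1)(m-1)$, a bound depending only on $r$ and $m$.

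\emph{Induction on $k$.} For $k=2$, $K_2$-minor-free means edgeless, so $\mb{G}$ is itself $r$-scattered and $N=m$ works. Assume the statement for $k-1$ with margin $k-3$. Given $\mb{G}$ connected, $K_k$-minor-free, of diameter less than $D$, with $|V|>N$ (for $N$ fixed at the end), a BFS from any vertex has at most $D$ layers, so some layer $L_\ell$ with $\ell\ge1$ has more than $N/D$ vertices. The prefix $L_0\cup\dots\cup L_{\ell-1}$ induces a connected subgraph, and every vertex of $L_\ell$ has a neighbour in $L_{\ell-1}$; contracting this prefix to one vertex $z$ therefore produces a minor of $\mb{G}$ in which $z$ is adjacent to all of $L_\ell$. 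It follows that $\mb{G}[L_\ell]$ excludes $K_{k-1}$ as a minor — otherwise adjoining $\{z\}$ as a further, universal branch set would give a $K_k$-minor of $\mb{G}$. Applying the induction hypothesis to a large component of $\mb{G}[L_\ell]$ yields a set $B'\subseteq L_\ell$ with $|B'|\le k-3$ such that $\mb{G}[L_\ell]-B'$ contains a large $r$-scattered set.

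\emph{The main obstacle, and how I would overcome it.} The difficulty is that a set which is $r$-scattered inside the induced subgraph $\mb{G}[L_\ell]$ need not be $r$-scattered in the ambient graph, since two vertices of $L_\ell$ that are far apart within $L_\ell$ can be joined by a short ``shortcut'' path through the neighbouring layers. The combinatorial construction of~\cite{KS99} used by~\cite{ADK06} is designed precisely to tame these shortcuts. The plan is to have the recursion produce an $r$-scattered set of $\mb{G}[L_\ell]-B'$ far larger than $m$, and then — invoking $K_k$-minor-freeness once more — to show that all shortcuts among its members can be destroyed by deleting a few further ``hub'' vertices while keeping the total deleted within $k-2$; the key point is that if too many hubs were needed, then together with a connected set linking them they would display a $K_k$-minor. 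Making this last step precise — the exact count of vertices that must be removed, and the clique minor that appears once the budget $k-2$ is exceeded — is where essentially all of the work lies; the reductions and the layer-contraction step above are routine. One finally chooses $N$ as a function of $k,r,m$ large enough for every inequality used along the way to hold with room to spare.
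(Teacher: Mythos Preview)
Your reductions to connected graphs and to bounded eccentricity are fine, and the layer-contraction trick showing that $\mb{G}[L_\ell]$ excludes $K_{k-1}$ is correct and elegant. But the step you yourself flag as ``where essentially all of the work lies'' is not a technicality you can defer: in your framework it does not go through.

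The problem is the budget. Your induction on $k$ spends $k-3$ of the $k-2$ allowed deletions on the set $B'$ that the inductive call returns inside $L_\ell$. That leaves you a single additional vertex with which to destroy \emph{all} short paths between members of your set $S$ that leave $L_\ell$. But such shortcuts can be plentiful and need not funnel through one hub: a vertex of $L_{\ell-1}$ may be adjacent to only two members of $S$, and there can be arbitrarily many such vertices, each creating a distance-$2$ link between a different pair in $S$, without producing any clique minor at all (think of a large matching between $S\subseteq L_\ell$ and $L_{\ell-1}$, together with the contracted prefix). So ``too many hubs $\Rightarrow K_k$-minor'' fails for this notion of hub, and one extra deletion cannot repair the situation. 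More fundamentally, being $r$-scattered in $\mb{G}[L_\ell]$ gives no control over distances in $\mb{G}$, and you have no mechanism to recover that control with only one deletion remaining.

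The argument in~\cite{ADK06}, which the paper recapitulates, is organised quite differently: it is an iteration on the radius, not an induction on $k$. One maintains a decreasing sequence $S_0\supseteq S_1\supseteq\cdots\supseteq S_r$ and an increasing sequence $B_0\subseteq\cdots\subseteq B_r$ with the invariant that the $i$-neighbourhoods in $\mb{G}[V\setminus B_i]$ of the points of $S_i$ are pairwise disjoint. Passing from stage $i$ to $i+1$, a Ramsey-type case analysis shows that either a large subset of $S_i$ already has disjoint $(i{+}1)$-neighbourhoods, or one finds $K_k$ as a minor (with branch sets contained in these neighbourhoods), or one identifies at most $k-2$ ``hub'' vertices whose removal restores the invariant; in the last case, needing $k-1$ hubs would itself produce a $K_k$-minor, which is what caps $|B_r|$ at $k-2$ across all stages. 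This is the mechanism that simultaneously controls the shortcuts and the deletion budget; your layer-based induction on $k$ does not supply an analogue of it.
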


The proof of Theorem~\ref{thm:minor} is a Ramsey-theoretic argument that proceeds by starting with a set $S\subseteq V$ with $N$ elements and constructing two sequences of sets: $S =: S_0 \supseteq S_1 \supseteq \cdots \supseteq S_{r}$ and $\emptyset =: B_0 \subseteq B_1 \subseteq \cdots \subseteq B_{r}$ such that for each $x,y \in S_i$ we have $N_i^{\mb{G}[V\setminus B_i]}(x) \cap N_i^{\mb{G}[V\setminus B_i]}(y) = \emptyset$.  If $\mb{K}_k \not\minor \mb{G}$ then we can carry the construction through for $r$ stages and $|S_r| \geq m$ and $|B_r| \leq k-2$.  If the construction fails at some stage $i \leq r$, it is because we have found that $\mb{K}_k$ is a minor of $\mb{G}$ and this can happen in one of three ways.
\begin{itemize}
\item We find that there are $s_1,\ldots,s_k \in S_i$ such that for each $1\leq j < l \leq k$, there is an edge between some vertex in $N_i^{\mb{G}[V\setminus B_i]}(s_j)$ and $N_i^{\mb{G}[V\setminus B_i]}(s_l)$.  In this case, we can take the collection of sets $N_i^{\mb{G}[V\setminus B_i]}(s_j)$ for $1 \leq j \leq k$ as branch sets.
\item We find that there are $s_1,\ldots,s_k \in S_i$ such that there are distinct vertices $x_{jl}$ for each $1\leq j < l \leq k$, where each $x_{jl}$ is a neighbour to some vertex in $N_i^{\mb{G}[V\setminus B_i]}(s_j)$ and to some vertex in $N_i^{\mb{G}[V\setminus B_i]}(s_l)$.  In this case, we find that $\mb{K}_k$ is a minor of $\mb{G}$ by taking as branch sets $N_i^{\mb{G}[V\setminus B_i]}(s_j) \cup \{ x_{jl} \mid j< l \}$ for  $1 \leq j \leq k$.
\item We find $s_1,\ldots,s_{k-1} \in S_i$ and vertices $x_1,\ldots,x_{k-1}$ such that $x_j$  has edges connecting it to each of the sets $N_{i}^{\mb{G}[V\setminus B_i]}(s_j)$.  Thus, $\mb{K}_k$ is found as a minor of $\mb{G}$ by taking as branch sets: $N_{i}^{\mb{G}[V \setminus B_i]}(s_j) \cup \{x_j\}$ for $1\leq j \leq k-2$ along with $N_{i}^{\mb{G}[V\setminus B_i]}(s_{k-1})$ and $\{x_{k-1}\}$.
\end{itemize}

The point of this brief recapitulation of the proof is to note that 
when $\mb{K}_k$ is found as a minor of $\mb{G}$ in case~(1) of the
theorem, the branch sets have radius at most $r+1$.  Thus, we actually
obtain the following stronger theorem.
\begin{theorem}\label{thm:shallow-minor}
  For any $k,r,m \in \nats$ there is an $N \in \nats$ such that if $\mb{G} = (V,E)$ is a graph with more than $N$ vertices then
\begin{enumerate}
\item either $\mb{K}_k \minor_{r+1} \mb{G}$; or
\item there is a set $B\subseteq V$ with $|B|\leq k-2$ such that $\mb{G}[V\setminus B]$ contains an $r$-scattered set of size $m$.
\end{enumerate}
\end{theorem}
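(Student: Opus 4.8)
The plan is to re-run, essentially verbatim, the Ramsey-theoretic construction used in~\cite{ADK06} to prove Theorem~\ref{thm:minor}, and to keep track of the radii of the branch sets produced in the case where $\mb{K}_k$ is exhibited as a minor. Case~(2) of Theorem~\ref{thm:shallow-minor} is word-for-word case~(2) of Theorem~\ref{thm:minor}, so nothing new is needed there; the entire task is to show that each of the three ways in which the construction can terminate by producing $\mb{K}_k$ as a minor of $\mb{G}$ in fact produces it as a minor \emph{at depth $r+1$} --- that is, with every branch set connected in $\mb{G}$ and contained in some $N_{r+1}^{\mb{G}}(w)$.

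Before looking at the cases I would record four elementary facts. First, \emph{monotonicity}: every path of $\mb{G}[V\setminus B_i]$ is a path of $\mb{G}$, so distances in $\mb{G}$ are bounded by distances in $\mb{G}[V\setminus B_i]$, and hence $N_i^{\mb{G}[V\setminus B_i]}(s)\subseteq N_i^{\mb{G}}(s)$ for every vertex $s$. Second, \emph{connectivity of balls}: in any graph and for any $s$, the set $N_i(s)$ induces a connected subgraph, since a shortest path to $s$ from any vertex of the ball stays inside the ball; in particular $N_i^{\mb{G}[V\setminus B_i]}(s)$ is connected in $\mb{G}[V\setminus B_i]$ and a fortiori in $\mb{G}$. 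Third, adjoining to a set that is connected in $\mb{G}$ a single vertex that has a $\mb{G}$-neighbour in that set keeps it connected in $\mb{G}$ and enlarges the radius around a fixed centre by at most one. Fourth, the construction maintains the invariant that the balls $N_i^{\mb{G}[V\setminus B_i]}(s)$ for distinct $s\in S_i$ are pairwise disjoint, and the auxiliary vertices attached in different branch sets are always chosen distinct from one another, so the branch sets produced remain pairwise disjoint.

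Now I would examine the three failure cases at the stage $i\le r$ at which failure occurs. In the first case the branch sets are the balls $N_i^{\mb{G}[V\setminus B_i]}(s_j)$, which by monotonicity and connectivity of balls are connected in $\mb{G}$ and contained in $N_i^{\mb{G}}(s_j)\subseteq N_{r+1}^{\mb{G}}(s_j)$. In the second case each branch set is such a ball together with the extra vertices $x_{jl}$ attached to the ball around $s_j$; each such $x_{jl}$ is a $\mb{G}$-neighbour of a vertex of $N_i^{\mb{G}[V\setminus B_i]}(s_j)$, hence lies at $\mb{G}$-distance at most $i+1$ from $s_j$, so by the third fact the branch set is connected in $\mb{G}$ and contained in $N_{i+1}^{\mb{G}}(s_j)\subseteq N_{r+1}^{\mb{G}}(s_j)$. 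In the third case the branch sets are, likewise, balls $N_i^{\mb{G}[V\setminus B_i]}(s_j)$ each enlarged by at most one $\mb{G}$-neighbour $x_j$ --- connected in $\mb{G}$ and contained in $N_{i+1}^{\mb{G}}(s_j)\subseteq N_{r+1}^{\mb{G}}(s_j)$ --- together with the singleton $\{x_{k-1}\}$, which is trivially connected and equal to $N_0^{\mb{G}}(x_{k-1})$. In every case the branch sets are pairwise disjoint by the fourth fact and carry between them exactly the edges required to witness $\mb{K}_k$ as a minor, by the same verification as in~\cite{ADK06}; hence they witness $\mb{K}_k\minor_{r+1}\mb{G}$.

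The only delicate point --- and the one I would flag as the main obstacle --- is the change of ambient graph: the whole construction lives inside the subgraphs $\mb{G}[V\setminus B_i]$, whereas the conclusion $\mb{K}_k\minor_{r+1}\mb{G}$ refers to neighbourhoods and connectivity \emph{in $\mb{G}$ itself}. The monotonicity fact is precisely what lets the radius bounds survive the passage back to $\mb{G}$, and the connectivity facts are what ensure the branch sets, together with the auxiliary vertices attached to them, remain connected once viewed in $\mb{G}$; beyond this bookkeeping the argument is a direct quotation of~\cite{ADK06}, with $r$ replaced by $r+1$ only in the alternative that $\mb{K}_k$ is a minor.
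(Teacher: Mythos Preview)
Your proposal is correct and follows exactly the approach the paper takes: re-examine the three failure cases of the Ramsey construction from~\cite{ADK06} and observe that in each one the branch sets produced lie inside balls of radius at most $r+1$ in $\mb{G}$. The paper dispatches this in a single sentence (``the branch sets have radius at most $r+1$''), whereas you have spelled out the monotonicity and connectivity bookkeeping needed to pass from $\mb{G}[V\setminus B_i]$ back to $\mb{G}$; this extra care is warranted but does not constitute a different argument.
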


We write $N(k,r,m)$ for the value of $N$ obtained from Theorem~\ref{thm:shallow-minor} for given $k,r$ and $m$.

The following result now follows immediately.

\begin{theorem}\label{thm:bounded-expansion}
  Any class of graphs of bounded expansion is quasi-wide. 
\end{theorem}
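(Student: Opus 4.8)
The plan is to derive this immediately from Theorem~\ref{thm:shallow-minor}, using the definition of bounded expansion to eliminate the first alternative of that theorem. So let $\CCC$ be a class of graphs of bounded expansion, witnessed by a function $f:\nats\ra\nats$ with $\grad_r(\mb{G})\leq f(r)$ for every $\mb{G}\in\CCC$ and every $r$. First I would observe that the depth-$(r{+}1)$ grad bound forbids large shallow clique minors: if $\mb{K}_k \minor_{r+1}\mb{G}$ for some $\mb{G}\in\CCC$, then, since $\mb{K}_k$ has $\binom{k}{2}$ edges on $k$ vertices, the definition of $\grad_{r+1}$ gives $\grad_{r+1}(\mb{G})\geq \binom{k}{2}/k = (k-1)/2$, whence $(k-1)/2 \leq f(r+1)$, i.e.\ $k \leq 2f(r+1)+1$. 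Taking the contrapositive, if we set $k(r)\defeq 2f(r+1)+2$ then $\mb{K}_{k(r)}\not\minor_{r+1}\mb{G}$ for every $\mb{G}\in\CCC$.

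With this in hand the rest is bookkeeping. I would define the margin $g:\nats\ra\nats$ by $g(r)\defeq k(r)-2 = 2f(r+1)$. Given $r$ and $m$, put $N\defeq N(k(r),r,m)$, the value supplied by Theorem~\ref{thm:shallow-minor}. For any $\mb{G}\in\CCC$ with more than $N$ vertices, Theorem~\ref{thm:shallow-minor} presents two cases; the first, $\mb{K}_{k(r)}\minor_{r+1}\mb{G}$, is ruled out by the paragraph above, so the second must hold: there is $B\subseteq V^{\mb{G}}$ with $|B|\leq k(r)-2 = g(r)$ such that $\mb{G}[V^{\mb{G}}\setminus B]$ contains an $r$-scattered set of size $m$. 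Since $r$ and $m$ were arbitrary and $g$ does not depend on $m$, this is precisely the assertion that $\CCC$ is quasi-wide with margin $g$.

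Honestly, there is no serious obstacle here once Theorem~\ref{thm:shallow-minor} is available — this is why the paper flags the result as following immediately. The one point genuinely requiring care, and the whole reason for upgrading Theorem~\ref{thm:minor} to Theorem~\ref{thm:shallow-minor}, is that the clique minor produced in alternative~(1) must be \emph{shallow} (of depth $r{+}1$): bounded expansion controls $\grad_{r+1}$ but not grads of unbounded depth, so a minor of uncontrolled depth could not be excluded. The only other thing to verify is that the resulting margin $k(r)-2 = 2f(r+1)$ is a function of $r$ alone, independent of $m$, which it plainly is.
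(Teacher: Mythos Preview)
Your proof is correct and essentially identical to the paper's own argument: both apply Theorem~\ref{thm:shallow-minor}, choose $k(r)=2f(r{+}1)+2$ so that the edge-to-vertex ratio of $\mb{K}_{k(r)}$ exceeds $f(r{+}1)$ and hence alternative~(1) is excluded, and conclude quasi-wideness with margin $k(r)-2$. Your additional commentary on why the shallow-minor upgrade is essential is accurate and matches the paper's motivation.
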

\begin{proof}
Suppose that $\CCC$ is a class of graphs of bounded expansion and let $f$ be a function such that for any graph $\mb{G}$ in $\CCC$, $\grad_r(\mb{G}) \leq f(r)$.  Let $k(r) := 2f(r+1)+2$.  Note that $$\frac{|E^{\mb{K}_{k(r)}}|}{|V^{\mb{K}_{k(r)}}|} = \frac{k(r)-1}{2} > f(r+1)$$
and therefore, by the definition of bounded expansion, $\mb{K}_{k(r)} \not\minor_{r+1} \mb{G}$ for any graph $\mb{G}$ in $\CCC$.  Thus, by Theorem~\ref{thm:shallow-minor}, if $\mb{G}$ has more than $N(k(r),r,m)$ vertices, it contains a set $B$ with at most $k(r)-2$ vertices such that $\mb{G}[V^{\mb{G}}\setminus B]$ contains an $r$-scattered set of size $m$.  Thus, $\CCC$ is quasi-wide with margin $k(r)-2$.
\end{proof}

We now consider the case of classes with locally excluded minors.  It is useful to first derive a straightforward corollary to Theorem~\ref{thm:shallow-minor}.
\begin{corollary}\label{cor:local-minors}
If $\mb{G} = (V,E)$ is a graph with more than $N(k,r,m)$ vertices then
\begin{enumerate}
\item either there is a $v \in V$ such that $\mb{K}_k \minor N_{3r+4}^{\mb{G}}(v)$; or
\item there is a set $B\subseteq V$ with $|B|\leq k-2$ such that $\mb{G}[V\setminus B]$ contains an $r$-scattered set of size $m$.
\end{enumerate}
\end{corollary}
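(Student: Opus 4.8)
Corollary~\ref{cor:local-minors} follows from Theorem~\ref{thm:shallow-minor} by observing that a shallow minor of $\mb{G}$ lives inside a bounded-radius neighbourhood of a single vertex of $\mb{G}$. Concretely, suppose $\mb{G}$ has more than $N(k,r,m)$ vertices and apply Theorem~\ref{thm:shallow-minor}. If case~(2) holds, we are immediately in case~(2) of the corollary, so assume case~(1): $\mb{K}_k \minor_{r+1} \mb{G}$. By definition of $\minor_{r+1}$, there are $k$ pairwise-disjoint connected branch sets $\mb{H}_1,\dots,\mb{H}_k$, each contained in $N_{r+1}^{\mb{G}}(w_i)$ for some vertex $w_i$, and for each pair $1 \le i < j \le k$ there is an edge of $\mb{G}$ between $\mb{H}_i$ and $\mb{H}_j$.

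**Locating all branch sets in one neighbourhood.**
The key step is to bound the diameter of the union $\bigcup_i \mb{H}_i$ together with the connecting edges, as measured in $\mb{G}$. First, since $\mb{H}_i$ is connected and contained in $N_{r+1}^{\mb{G}}(w_i)$, any two vertices of $\mb{H}_i$ are within distance $2(r+1)$ of each other in $\mb{G}$ (going through $w_i$) — in fact within the connected subgraph $\mb{H}_i$ itself, but the coarse bound suffices. Fix any vertex $v \in \mb{H}_1$. I claim every vertex of every $\mb{H}_i$ lies in $N_{3r+4}^{\mb{G}}(v)$. Indeed, take $u \in \mb{H}_i$. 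There is a vertex $p_1 \in \mb{H}_1$ and $p_i \in \mb{H}_i$ joined by an edge (the connecting edge for the pair $\{1,i\}$, or $v$ itself if $i=1$). Then $\mathrm{dist}_{\mb{G}}(v,u) \le \mathrm{dist}_{\mb{G}}(v,p_1) + 1 + \mathrm{dist}_{\mb{G}}(p_i,u) \le 2(r+1) + 1 + 2(r+1) = 4r+5$. This gives radius $4r+5$, which is slightly weaker than the claimed $3r+4$; to get the stated bound one routes more carefully, using that the branch sets are connected (so distances inside $\mb{H}_i$ are at most $|\mb{H}_i|-1$, but more usefully at most the radius bound $r+1$ from the \emph{witnessing} vertex, and crucially that one can pick $v = w_1$ and note each $\mb{H}_i$ is within distance $r+1$ of $w_i$, with $w_i$ within distance $r+1+1+r+1$ of $w_1$ via a connecting edge, giving each vertex of $\mb{H}_i$ within $r+1 + (2r+3) = 3r+4$ of $w_1$). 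Thus with $v := w_1$ we have $\bigcup_i \mb{H}_i \subseteq N_{3r+4}^{\mb{G}}(v)$.

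**Concluding.**
Since $N_{3r+4}^{\mb{G}}(v)$ is an induced subgraph of $\mb{G}$ containing all the branch sets $\mb{H}_1,\dots,\mb{H}_k$ and, for each pair $i<j$, a connecting edge between $\mb{H}_i$ and $\mb{H}_j$ (both endpoints of that edge lie in $\bigcup_i \mb{H}_i \subseteq N_{3r+4}^{\mb{G}}(v)$, hence the edge is present in the induced subgraph), the same branch sets and edges witness $\mb{K}_k \minor N_{3r+4}^{\mb{G}}(v)$. This is exactly case~(1) of the corollary. The only point requiring care — and the one I expect to be the main obstacle — is pinning down the precise radius constant $3r+4$: one must choose the base vertex $v$ as one of the witnessing centres $w_i$ rather than an arbitrary branch-set vertex, and account for the connecting edges contributing exactly one step rather than being absorbed into the neighbourhood radius. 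All the rest is bookkeeping on distances.
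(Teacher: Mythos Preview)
Your proposal is correct and follows essentially the same route as the paper's proof: apply Theorem~\ref{thm:shallow-minor}, and in case~(1) take the witnessing centre $v_1$ as the base vertex, routing to any $u\in\mb{H}_i$ via a path $v_1\to w\in\mb{H}_1\to w'\in\mb{H}_i\to u$ of length at most $(r+1)+1+(2r+2)=3r+4$. Your initial detour through the weaker $4r+5$ bound is unnecessary but harmless; the refined argument you give (choosing $v=w_1$) is exactly the paper's computation.
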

\begin{proof}
  Suppose condition~(2) fails.  Then, by
  Theorem~\ref{thm:shallow-minor}, we have $\mb{K}_k \minor_{r+1}
  \mb{G}$.  Let $\mb{H}_1,\ldots,\mb{H}_k$ be the branch sets that
  witness this and let $v_1,\ldots,v_k$ be vertices such that
  $\mb{H}_i \subseteq N_{r+1}^{\mb{G}}(v_i)$.  Then, for any $i$ and
  any vertex $u$ in $\mb{H}_i$ there is a path of length at most
  $3r+4$ from $v_1$ to $u$.  This is because there is an edge between
  some vertex $w$ in $\mb{H}_1$ and a vertex $w'$ in $\mb{H}_i$.
  Moreover, there is a path of length at most $r+1$ from $v_1$ to $w$
  and since $u,w' \in N_{r+1}^{\mb{G}}(v_i)$, there is a path of
  length at most $2r+2$ from $w'$ to $u$.  Thus, $\bigcup_{i=1}^k
  \mb{H}_i \subseteq N_{3r+4}^{\mb{G}}(v_1)$ and hence $\mb{K}_k
  \minor N_{3r+4}^{\mb{G}}(v_1)$.
\end{proof}

\begin{theorem}\label{thm:local-minors}
  Any class of graphs that locally excludes minors is quasi-wide.
\end{theorem}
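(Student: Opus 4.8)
The plan is to read the conclusion off Corollary~\ref{cor:local-minors}, choosing the clique size in terms of the function witnessing local exclusion of minors. Let $f:\nats\ra\nats$ be a function such that for every $\mb{G}\in\CCC$, every vertex $v$ of $\mb{G}$, and every $s\in\nats$ we have $\mb{K}_{f(s)}\not\minor N_s^{\mb{G}}(v)$; such an $f$ exists by hypothesis. Define $g:\nats\ra\nats$ by $g(r)\defeq f(3r+4)$. The claim is that $\CCC$ is quasi-wide with margin $g$, which by Definition~\ref{def:quasi-wide} is exactly what the theorem asserts.

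To establish the claim, fix $r$ and $m$, set $k\defeq f(3r+4)=g(r)$, and let $N\defeq N(k,r,m)$ be the bound supplied by Theorem~\ref{thm:shallow-minor}. Let $\mb{G}\in\CCC$ have more than $N$ vertices. By the choice of $f$, for every vertex $v$ of $\mb{G}$ we have $\mb{K}_k\not\minor N_{3r+4}^{\mb{G}}(v)$, so alternative~(1) of Corollary~\ref{cor:local-minors} fails for $\mb{G}$. Hence alternative~(2) holds: there is a set $B\subseteq V^{\mb{G}}$ with $|B|\le k-2\le g(r)$ such that $\mb{G}[V^{\mb{G}}\setminus B]$ contains an $r$-scattered set of size $m$. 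Since $r$ and $m$ were arbitrary, $\CCC$ is quasi-wide with margin $g$. (One could take the slightly sharper margin $r\mapsto f(3r+4)-2$, reading a negative value as $0$; the coarser $g$ already suffices and is cleaner to state.)

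I do not expect any real difficulty: all of the combinatorial work has already been done in passing from Theorem~\ref{thm:minor} to Theorem~\ref{thm:shallow-minor} and Corollary~\ref{cor:local-minors}, the key point being that when the Ramsey-type construction fails and exhibits $\mb{K}_k$ as a minor of $\mb{G}$, that minor is witnessed by branch sets of radius at most $r+1$, hence all of it lies inside a single neighbourhood of radius $3r+4$. Given that, the only thing to verify here is the arithmetic matching the radius $3r+4$ occurring in the corollary with the argument of the local-exclusion function $f$.
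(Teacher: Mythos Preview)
Your proof is correct and follows essentially the same approach as the paper: both set $k(r)=f(3r+4)$ and read the conclusion directly off Corollary~\ref{cor:local-minors}, the only cosmetic difference being that the paper states the margin as $k(r)-2$ while you use the (slightly coarser) $g(r)=f(3r+4)$ and note the sharper bound in passing.
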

\begin{proof}
Suppose $\CCC$ is a class of graphs that locally excludes minors.  In particular, let $f$ be a function such that for any $r$, $\mb{K}_{f(r)} \not\minor N^{\mb{G}}_r(v)$ for any graph $\mb{G}$ in $\CCC$ and any vertex $v$ of $\mb{G}$.

Now, for any $r$, let $k(r) \defeq f(3r+4)$.  By definition, for any graph $\mb{G}$ in $\CCC$ and any vertex $v$ of $\mb{G}$, $K_{k(r)} \not\minor N_{3r+4}^{\mb{G}}(v)$.  Thus, by Corollary~\ref{cor:local-minors}, if $\mb{G}$ has more than $N(k(r),r,m)$ vertices, it contains a set $B$ with at most $k(r)-2$ vertices such that $\mb{G}[V^{\mb{G}}\setminus B]$ contains an $r$-scattered set of size $m$.  Thus, $\CCC$ is quasi-wide with margin $k(r)-2$.
\end{proof}

We can now state the main results of the paper.
\begin{theorem}
  Any class $\CCC$ of finite structures that has bounded expansion and is closed under taking substructures and disjoint unions has the homomorphism preservation property.
\end{theorem}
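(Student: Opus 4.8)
The plan is to deduce this theorem by chaining together the two main engines already in place: Theorem~\ref{thm:quasi-wide}, which says that quasi-wideness plus closure under substructures and disjoint unions implies the homomorphism preservation property, and Theorem~\ref{thm:bounded-expansion}, which says that any class of graphs of bounded expansion is quasi-wide. So the argument is essentially a matter of assembling these pieces correctly and handling the (mostly bookkeeping) point that the bounded-expansion hypothesis on $\CCC$ is a condition on the Gaifman graphs $\GGG(\Aa)$ of structures $\Aa \in \CCC$, whereas Theorem~\ref{thm:bounded-expansion} is literally stated for classes of graphs.

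First I would spell out the translation of hypotheses. Let $\CCC$ be the given class of finite $\sigma$-structures, closed under substructures and disjoint unions, whose class of Gaifman graphs $\GGG(\CCC) := \{ \GGG(\Aa) \mid \Aa \in \CCC \}$ has bounded expansion. By Theorem~\ref{thm:bounded-expansion}, $\GGG(\CCC)$ is quasi-wide, say with margin $g$. Now I would observe that quasi-wideness of $\CCC$ as a class of structures is, by Definition~\ref{def:quasi-wide}, a statement purely about the Gaifman graphs: a set $B$ of at most $g(r)$ elements of $\Aa$ such that $\GGG(\Aa)[A \setminus B]$ contains an $r$-scattered set of size $m$ is exactly what the definition demands, and $r$-scatteredness of a set in $\Aa$ is defined via $N_r^{\GGG(\Aa)}$. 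Hence $\CCC$ itself is quasi-wide with margin $g$. The only subtlety worth a sentence is that the $N$ furnished by quasi-wideness of $\GGG(\CCC)$ is in terms of the number of vertices of the Gaifman graph, which equals the number of elements of the structure, so the bound transfers verbatim.

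Finally I would invoke Theorem~\ref{thm:quasi-wide}: since $\CCC$ is quasi-wide and closed under taking substructures and disjoint unions, it has the homomorphism preservation property, which is exactly the claim. (For completeness one might also note the analogous statement for locally excluded minors follows identically from Theorem~\ref{thm:local-minors} in place of Theorem~\ref{thm:bounded-expansion}; presumably the paper states that separately.)

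I do not expect any genuine obstacle here — the theorem is a corollary, and all the real work has been done in Sections~\ref{sec:ajtai-g} and~\ref{sec:classes}. The only place to be careful is the interface between ``class of graphs'' and ``class of structures'': one must check that Theorem~\ref{thm:bounded-expansion} can be applied to $\GGG(\CCC)$ and that the resulting quasi-wideness pulls back to $\CCC$ with the same margin function, which it does because every notion involved (scattered sets, neighbourhoods, the counting of elements) is defined through the Gaifman graph. No new combinatorics or logic is needed.
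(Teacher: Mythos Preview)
Your proposal is correct and matches the paper's own proof, which is simply ``Immediate from Theorem~\ref{thm:quasi-wide} and Theorem~\ref{thm:bounded-expansion}.'' The extra bookkeeping you supply about passing between $\CCC$ and $\GGG(\CCC)$ is handled by the paper's standing convention (stated in Section~\ref{sec:prelim}) that a graph-theoretic restriction on a class of structures means the restriction on its Gaifman graphs, so no additional argument is required.
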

\begin{proof}
  Immediate from Theorem~\ref{thm:quasi-wide} and Theorem~\ref{thm:bounded-expansion}.
\end{proof}

\begin{theorem}
  Any class $\CCC$ of finite structures that locally excludes minors and is closed under taking substructures and disjoint unions has the homomorphism preservation property.
\end{theorem}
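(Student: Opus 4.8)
The plan is to obtain this statement as an immediate corollary of the machinery already assembled, exactly in parallel with the proof of the bounded-expansion case. First I would invoke Theorem~\ref{thm:local-minors}, which tells us that any class of graphs that locally excludes minors is quasi-wide. Since the hypothesis of the theorem states that the class $\CCC$ of finite structures locally excludes minors --- meaning, by the convention fixed in the paper, that the collection of Gaifman graphs $\{\GGG(\Aa) : \Aa \in \CCC\}$ locally excludes minors --- Theorem~\ref{thm:local-minors} yields a margin function $f$ witnessing that this collection of Gaifman graphs is quasi-wide. But quasi-wideness of a class of structures is defined (Definition~\ref{def:quasi-wide}) purely in terms of the Gaifman graphs $\GGG(\Aa)$ and $r$-scattered sets therein, so this immediately gives that $\CCC$ itself is quasi-wide with margin $f$.

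Next, I would combine this with the hypothesis that $\CCC$ is closed under taking substructures and disjoint unions and apply Theorem~\ref{thm:quasi-wide}, which asserts precisely that any class of structures that is quasi-wide and closed under substructures and disjoint unions has the homomorphism preservation property. This concludes the argument. So the proof body is just:
\begin{proof}
  Immediate from Theorem~\ref{thm:quasi-wide} and Theorem~\ref{thm:local-minors}.
\end{proof}

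There is essentially no obstacle here, since all the substantive work --- adapting the Ajtai--Gurevich density lemma so that the locality radius $r$ is independent of the number $k$ of removed elements (Lemma~\ref{lem:stronger-ag}, hence Theorem~\ref{thm:quasi-wide}), and pushing the Ramsey-theoretic minor-extraction argument of~\cite{ADK06} to the shallow-minor version (Theorem~\ref{thm:shallow-minor}) and thence through Corollary~\ref{cor:local-minors} to Theorem~\ref{thm:local-minors} --- has already been carried out earlier in the paper. The only point worth a moment's care is the bookkeeping of which notion of ``locally excludes minors'' is in play: the theorem speaks of a class of \emph{structures}, while Theorem~\ref{thm:local-minors} is stated for a class of \emph{graphs}, and one must note that these agree by the stated convention that a structural class satisfies a graph-theoretic restriction exactly when its class of Gaifman graphs does, and that both quasi-wideness and the homomorphism preservation property are insensitive to this passage. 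If anything, the one-line proof could be expanded to spell this out, but it is genuinely routine.
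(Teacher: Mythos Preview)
Your proposal is correct and matches the paper's proof exactly: the paper's proof body is the single line ``Immediate from Theorem~\ref{thm:quasi-wide} and Theorem~\ref{thm:local-minors}.'' Your additional remark about passing between classes of structures and classes of Gaifman graphs is accurate and, as you note, routine given the conventions fixed earlier in the paper.
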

\begin{proof}
  Immediate from Theorem~\ref{thm:quasi-wide} and Theorem~\ref{thm:local-minors}.
\end{proof}
\section{Failure of Preservation}\label{sec:counterexample}
In this section we give an example of a class of structures $\SSS$
which is closed under substructures and disjoint unions but does not
have the homomorphism preservation property.

The class $\SSS$ is over a signature $\tau$ with two binary relations
$O$ and $S$ and one unary relation $P$.  For any $n \in \nats$, let
$L_n$ be the $\tau$-structure over the universe $\{1,\ldots,n\}$ in
which $O$ is interpreted as the usual linear order, i.e.\ $O(i,j)$
just in case $i<j$; $S$ is the successor relation: $S(i,j)$ just in
case $j=i+1$; and $P$ is interpreted by the set $\{1,n\}$ containing
the two endpoints.  Let $\LLL$ be the class of structures isomorphic
to $L_n$ for some $n$.  Then $\SSS$ is the closure of $\LLL$ under
substructures and disjoint unions.  Note that every structure $\Aa$ in
$\SSS$ is isomorphic to the disjoint union of a collection $A_1,\ldots,A_s$ of
structures, each of which is a substructure of some $L_n$.

We begin with some observations about structures in $\SSS$.  

\begin{lemma}\label{lem:single-order}
If $\Aa$ is
a structure such that $\Aa \subseteq L_m$ for some $m$ and there is a
homomorphism $h: L_n \ra \Aa$ for some $n\geq 2$, then $L_n \iso \Aa \iso L_m$.
\end{lemma}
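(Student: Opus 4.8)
The plan is to analyse the structure of a substructure of $L_m$ and to use the homomorphism from $L_n$ to force it to be all of $L_m$ (and isomorphic to $L_n$). First I would observe that since $\Aa \subseteq L_m$, its universe is some subset $A \subseteq \{1,\ldots,m\}$, and $O^\Aa$, $S^\Aa$, $P^\Aa$ are subrelations of the corresponding relations on $L_m$ restricted to $A$. The key point is that $O^{L_m}$ is a linear order, so $O^\Aa$ is a (possibly proper) subrelation of a linear order on $A$; in particular $O^\Aa$ is irreflexive, antisymmetric and transitive is not automatic (a subrelation of a transitive relation need not be transitive), but it is at least irreflexive and antisymmetric, and it is contained in the genuine linear order ``$<$'' inherited from $L_m$.

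Next I would use the homomorphism $h : L_n \ra \Aa$. Since $h$ preserves $O$, and $O^{L_n}$ makes $1,2,\ldots,n$ a strict linear order, the images $h(1),\ldots,h(n)$ satisfy $O^\Aa(h(i),h(j))$ whenever $i<j$; because $O^\Aa$ is contained in the strict order ``$<$'' on $A$, this forces $h(1) < h(2) < \cdots < h(n)$ in $L_m$, so in particular $h$ is injective and $|A| \geq n$. Moreover $h$ preserves $S$: $S^{L_n}(i,i+1)$ holds, so $S^\Aa(h(i),h(i+1))$ holds, and $S^\Aa \subseteq S^{L_m}$ means $h(i+1) = h(i)+1$ in $\{1,\ldots,m\}$. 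Hence $h(1),h(2),\ldots,h(n)$ is a block of $n$ consecutive integers; together with $h$ preserving $P$, we get $h(1),h(n) \in P^\Aa \subseteq P^{L_m} = \{1,m\}$, and since $h(1) < h(n)$ this forces $h(1) = 1$ and $h(n) = m$, so $m = n$ and the image of $h$ is all of $\{1,\ldots,m\}$. Thus $A = \{1,\ldots,m\}$ with $m = n$.

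It then remains to check that all the relations of $\Aa$ are the full ones, i.e.\ $\Aa \iso L_m$. Since $A = \{1,\ldots,m\}$ and we have just shown $S^\Aa$ contains every pair $(i,i+1)$ (as $h$ is the identity up to the above, $S^\Aa(h(i),h(i+1)) = S^\Aa(i,i+1)$), and $S^\Aa \subseteq S^{L_m}$, we get $S^\Aa = S^{L_m}$. For $O^\Aa$: I would argue that $O^\Aa$ must contain all pairs $(i,j)$ with $i<j$. Indeed, $h$ preserves $O$, so $O^\Aa$ contains $(h(i),h(j)) = (i,j)$ for all $i<j$ in $\{1,\ldots,n\}=\{1,\ldots,m\}$; combined with $O^\Aa \subseteq O^{L_m}$ this gives $O^\Aa = O^{L_m}$. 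Similarly $P^\Aa$ contains $h(1)=1$ and $h(n)=m$ hence $P^\Aa = \{1,m\} = P^{L_m}$. So $\Aa = L_m = L_n$ as structures, giving $L_n \iso \Aa \iso L_m$.

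The main obstacle I anticipate is the bookkeeping around the direction of the inequalities and the case $n \geq 2$: one has to be careful that $O^\Aa$, being only a \emph{sub}relation of the order, still pins down the positions of the $h(i)$, and the argument above leans on the successor relation $S$ to do this (the linear order alone would not force consecutiveness). The hypothesis $n \geq 2$ is needed so that $P^{L_n}$ actually contains two distinct points $1$ and $n$ whose images are constrained to be $\{1,m\}$; for $n=1$ the structure $L_1$ has $P^{L_1}=\{1\}$ and the statement would fail. Everything else is a routine unwinding of the definitions of homomorphism and substructure.
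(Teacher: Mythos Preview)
Your proof is correct and follows essentially the same approach as the paper's: use preservation of $P$ and $O$ to force the images of the endpoints of $L_n$ to be the endpoints of $L_m$, use preservation of $S$ to force the image to be a block of consecutive integers (hence $m=n$ and $h$ the identity), and then verify that all relations must be full. Your write-up is more detailed and carefully ordered than the paper's terse version, but the underlying argument is the same.
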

\begin{proof}
Note that, by definition of the structures $L_m$, if $O(a,b)$ for two
elements $a,b$ of $\Aa$ then $a\neq b$.
Since $L_n$ contains two elements $1,n$ in the set $P$ with $O(1,n)$
we conclude that $\Aa$ contains both endpoints of $L_m$ and they are
both in the set $P^{\Aa}$.  Furthermore, $L_n$
contains an $S$-path from $1$ to $n$.  The image of this path under
$h$ must be an $S$-path between the end points of $L_m$ and we
conclude that $m=n$ and $h$ is the identity map.
Finally, suppose that for some $i,j$ in $L_m$ with $i<j$, the pair
$(i,j)$ is not in $O^{\Aa}$.  But then, since $(i,j) \in O^{L_n}$ and $h$ is the identity,
$h$ is not a homomorphism.  We conclude that $\Aa \iso L_n$.
\end{proof}

Say that a structure $\Aa \in \SSS$ \emph{contains a complete order}
if there is some $n\geq 2$ such that $L_n \subseteq \Aa$.
\begin{lemma}\label{lem:contains-order}
  If $\Aa$ and $\Bb$ in $\SSS$ are such that $\Aa$ contains a complete
  order and there is a homomorphism $h: \Aa \ra \Bb$, then $\Bb$
  contains a complete order.
\end{lemma}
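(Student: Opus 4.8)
The plan is to exploit Lemma~\ref{lem:single-order}, which tells us that the only structure in $\SSS$ admitting a homomorphism from some $L_n$ with $n \geq 2$ is (an isomorphic copy of) $L_n$ itself. First I would observe that since $\Aa$ contains a complete order, there is some $n \geq 2$ with $L_n \subseteq \Aa$; composing the inclusion $L_n \hookrightarrow \Aa$ with the given homomorphism $h : \Aa \ra \Bb$ yields a homomorphism $L_n \ra \Bb$. Now $\Bb \in \SSS$, so by the structural description of members of $\SSS$, $\Bb$ is isomorphic to a disjoint union of structures $B_1,\ldots,B_s$, each of which is a substructure of some $L_m$. Since $L_n$ is connected (its $O$-relation links every pair, so its Gaifman graph is complete), the image of the homomorphism $L_n \ra \Bb$ lies entirely within a single component, say $B_t$; hence the homomorphism factors through a homomorphism $L_n \ra B_t$ with $B_t \subseteq L_m$ for some $m$.

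Next I would apply Lemma~\ref{lem:single-order} directly to $B_t$: since $B_t \subseteq L_m$ and there is a homomorphism $L_n \ra B_t$ with $n \geq 2$, the lemma gives $L_n \iso B_t \iso L_m$. In particular $L_m \iso B_t \subseteq \Bb$ (up to the isomorphism $\Bb \iso B_1 \oplus \cdots \oplus B_s$), so $\Bb$ contains a complete order, as required.

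The only point needing a little care — and the one I would flag as the mild obstacle — is the factoring-through-a-component step: one must justify that a homomorphism from a connected structure into a disjoint union has image in a single summand. This follows because any homomorphic image of a connected structure is connected (a homomorphism maps the Gaifman graph onto a subgraph of the target's Gaifman graph, preserving connectivity), and the components of $\Bb$ are exactly the $B_t$'s with no edges between them. Everything else is routine bookkeeping with the definitions, so the proof is short.
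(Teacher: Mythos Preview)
Your proposal is correct and follows essentially the same route as the paper's own proof: restrict $h$ to the copy of $L_n$ inside $\Aa$, use connectivity of $L_n$ to land inside a single summand $B_i \subseteq L_{m_i}$ of $\Bb$, and then invoke Lemma~\ref{lem:single-order}. The only minor imprecision is the remark that ``the components of $\Bb$ are exactly the $B_t$'s'': a summand $B_t$ need not itself be connected, but since there are no edges between distinct summands each connected component of $\Bb$ lies in a single $B_t$, which is all you need.
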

\begin{proof}
  Suppose $L_n \subseteq \Aa$ and $\Bb = B_1 \oplus \cdots \oplus B_s$
  where for each $i$, $B_i \subseteq L_{m_i}$ for some $m_i$.  Since the
  $B_i$ are pairwise disjoint and $L_n$ is connected there is some $i$
  such that $h(L_n) \subseteq B_i$.  But then, by
  Lemma~\ref{lem:single-order}, $B_i \iso L_n$ and so $\Bb$ contains a
  complete order.
\end{proof}

Our aim now is to construct a first-order sentence that defines those
structures in $\SSS$ that contain a complete order.

We write $x \leq y$ as an abbreviation for the formula $O(x,y) \vee
x=y$.  Let $\beta(x,y,z)$ denote the formula $x \leq z \land z \leq y$
and let $\lambda(x,y)$ denote the formula that asserts that $O(x,y)$
and that $\leq$ linearly orders the set of elements $\{z \mid x\leq z
\mbox{ and } z\leq y\}$.  That is, $\lambda(x,y)$ is the formula:
$$ O(x,y) \land \forall z_1 \forall z_2 (\beta(x,y,z_1) \land \beta(x,y,z_2))
\ra (z_1 \leq z_2 \vee z_2 \leq z_1). $$
Let $\nu(z_1,z_2)$ denote the formula $O(z_1,z_2) \land \forall w
\neg(O(z_1,w) \land O(w,z_2))$.  In words, $\nu(z_1,z_2)$ defines the
pairs of elements in the relation $O$ with nothing in between them.
We are now ready to define the sentence $\phi$:
$$ 
\begin{array}{l@{}l}
\exists x \exists y (P(x) & \land P(y)  \land \lambda(x,y) \land \\
  & \land \forall z_2 \forall z_2 (\beta(x,y,z_1) \land \beta(x,y,z_2)
  \land \nu(z_1,z_2) ) \ra S(z_1,z_2)).
\end{array}
$$
That is, $\phi$ asserts that there exist two elements $x$ and $y$ in
the relation $P$ such that the set $\{z \mid x\leq z \mbox{ and }
z\leq y\}$ is linearly ordered by $O$ and any two successive elements
in that linear order are related by $S$.
\begin{lemma}\label{lem:def-phi}
  For any $\Aa$ in $\SSS$, $\Aa \models \phi$ if, and only if, $\Aa$
  contains a complete order.
\end{lemma}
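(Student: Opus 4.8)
The plan is to prove the two directions separately, relying on the structural description of members of $\SSS$ as disjoint unions of substructures of the linear orders $L_m$, together with Lemma~\ref{lem:single-order}. For the easier direction, suppose $\Aa$ contains a complete order, say $L_n \subseteq \Aa$ with $n \geq 2$. I would take $x$ and $y$ to be the images in $\Aa$ of the endpoints $1$ and $n$ of this copy of $L_n$. Then $P(x)$ and $P(y)$ hold (the endpoints of $L_n$ lie in $P^{L_n} \subseteq P^{\Aa}$), and $O(x,y)$ holds. The key point is that the set $\{z \mid x \leq z \leq y\}$ computed \emph{in $\Aa$} is exactly the set $\{1,\dots,n\}$ of this copy, because no element of $\Aa$ outside this copy can be $O$-between $x$ and $y$: $\Aa$ is a disjoint union of pieces $B_i$, each a substructure of some $L_{m_i}$, so $O$-related pairs always lie in a common piece, and within the piece containing $L_n$ the element $y$ is already the top. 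Here I should be slightly careful: a priori the piece of $\Aa$ containing the copy of $L_n$ could be a larger substructure of some $L_m$; but since it contains an embedded $L_n$ with endpoints in $P$, and those endpoints are $O$-extreme in $L_n$, one checks the interval $[x,y]$ in $\Aa$ is still just $\{1,\dots,n\}$. On this interval $O$ restricts to the usual linear order (inherited from $L_m$), so $\lambda(x,y)$ holds, and the $\nu$-successive pairs are precisely the $S$-edges $S(i,i+1)$ of $L_n$, so the final conjunct of $\phi$ holds. Hence $\Aa \models \phi$.

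For the converse, suppose $\Aa \models \phi$, witnessed by elements $x,y$. Write $\Aa = B_1 \oplus \cdots \oplus B_s$ with $B_i \subseteq L_{m_i}$. Since $O(x,y)$ holds, $x$ and $y$ lie in a common piece $B_i$, and the whole interval $I := \{z \mid x \leq z \leq y\}$ lies in $B_i$ as well (again because $O$ never links distinct pieces, so anything $O$-between $x$ and $y$ is in $B_i$). The conjunct $\lambda(x,y)$ tells us $O$ linearly orders $I$; since $B_i$ is a substructure of $L_{m_i}$, $O^{B_i}$ is already the restriction of a linear order, and I claim $I$ is an \emph{interval} of $L_{m_i}$ from $x$ to $y$ in the sense of containing every $L_{m_i}$-element strictly between them — otherwise there would be some $w$ of $L_{m_i}$ with $x < w < y$ not present in $B_i$; this does not immediately contradict $\lambda$, but it will obstruct the last conjunct. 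Let $x = z_0, z_1, \dots, z_\ell = y$ enumerate $I$ in $O$-order. Each consecutive pair $(z_{j},z_{j+1})$ satisfies $\nu(z_j,z_{j+1})$ in $\Aa$ — nothing of $\Aa$ is $O$-between them — so by the last conjunct of $\phi$, $S(z_j, z_{j+1})$ holds. But $S^{B_i} \subseteq S^{L_{m_i}}$ is the successor relation, so $S(z_j,z_{j+1})$ forces $z_{j+1}$ to be the immediate $L_{m_i}$-successor of $z_j$. Chaining these, $I = \{x, x+1, \dots, y\}$ is a genuine successor-chain in $L_{m_i}$, with $x,y \in P^{\Aa}$, hence $x$ and $y$ are the two endpoints $1$ and $m_i$ of $L_{m_i}$ (the only elements of $P$), and so $B_i \iso L_{m_i}$ and this copy sits inside $\Aa$: $L_{m_i} \subseteq \Aa$ with $m_i \geq 2$. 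Thus $\Aa$ contains a complete order.

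The main obstacle is the bookkeeping around the possibility that the piece $B_i$ containing $x$ and $y$ is a proper (non-induced, not full-length) substructure of $L_{m_i}$, so that the $O$-interval $[x,y]$ taken in $\Aa$ might skip vertices of $L_{m_i}$. The resolution, carried out above, is exactly that the last conjunct of $\phi$ plus the fact that $S^{\Aa}$ is inherited from the successor relations $S^{L_{m_i}}$ rules this out: a genuine $S$-edge in $\Aa$ can only connect $L_{m_i}$-consecutive elements, so the witnessing interval cannot skip anything and must in fact be a full successor-path between two $P$-elements, which pins down $x = 1$, $y = m_i$, and $B_i \iso L_{m_i}$. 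Everything else is a routine unwinding of the definitions of $\beta$, $\lambda$, $\nu$ and the semantics of disjoint unions.
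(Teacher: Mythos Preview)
Your argument follows the paper's closely, but there is a real gap in the converse direction at the claim ``each consecutive pair $(z_j, z_{j+1})$ satisfies $\nu(z_j, z_{j+1})$ in $\Aa$.'' The quantifier $\forall w$ inside $\nu$ ranges over all of $\Aa$, not just over $I$, and because substructures here are not induced, $O^{B_i}$ may be a proper subset of the restriction of $O^{L_{m_i}}$; this allows an element $w \in B_i \setminus I$ to sit $O$-between $z_j$ and $z_{j+1}$ and falsify $\nu(z_j,z_{j+1})$. Concretely, take $B \subseteq L_5$ with universe $\{1,2,3,5\}$, $O^B = \{(1,2),(1,5),(2,3),(2,5),(3,5)\}$, $S^B = \{(1,2)\}$, $P^B = \{1,5\}$. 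With $x=1$, $y=5$ one computes $I = \{1,2,5\}$ (the pair $(1,3)$ is absent from $O^B$, so $3\notin I$), and $\lambda(1,5)$ holds. The \emph{only} pair from $I$ satisfying $\nu$ is $(1,2)$: for $(2,5)$ the element $w=3$ witnesses $O(2,3)\land O(3,5)$, so $\nu(2,5)$ fails. Hence the final conjunct of $\phi$ reduces to $S(1,2)$, which holds, and $B\models\phi$. Yet $B$ contains no complete order: the only $P$-elements are $1$ and $5$, and $S^B$ provides no path between them.

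So the lemma is in fact false for $\phi$ exactly as written; the paper's own proof is terse at precisely this point (``the sentence $\phi$ then guarantees that $A_i$ contains all elements of $L_n$ and all tuples'') and does not address it either. A clean repair to $\phi$ is to relativise the inner quantifier of $\nu$ to the interval, replacing $\forall w\,\neg(\cdots)$ by $\forall w\,(\beta(x,y,w)\to\neg(\cdots))$. With that change your converse argument goes through verbatim, since consecutive elements of your enumeration of $I$ are then $\nu$-adjacent by construction, and the forward direction is unaffected. For the forward direction itself, rather than the informal ``one checks,'' you can simply invoke Lemma~\ref{lem:single-order}: if a copy of $L_n$ lies inside the piece $B_i\subseteq L_{m_i}$, the inclusion is a homomorphism $L_n\to B_i$, whence $B_i\cong L_n$, and the interval $[x,y]$ in $\Aa$ is exactly this copy.
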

\begin{proof}
  It is clear that if $L_n \subseteq \Aa$, then $\Aa \models \phi$ with the endpoints of $L_n$ being witnesses to the outer existential quantifiers.
  For the converse, suppose that $\Aa \models \phi$ and $a$ and $b$
  are elements witnessing the outer existential quantifiers.  By the
  facts $P(a)$, $P(b)$ and $O(a,b)$ we know that there is an $A_i
  \subseteq \Aa$ and an $n$ such that $A_i \subseteq L_n$ with $a,b$
  being the endpoints of $L_n$.  The sentence $\phi$ then guarantees
  that $A_i$ contains all elements of $L_n$ and all tuples in the
  relations.   Thus $A_i \iso L_n$ and so $\Aa$ contains a complete
  order. 
\end{proof}

\begin{lemma}
  The formula $\phi$ is preserved under homomorphisms on the class $\SSS$.
\end{lemma}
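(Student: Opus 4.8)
The plan is to combine Lemma~\ref{lem:def-phi} with Lemma~\ref{lem:contains-order}. By Lemma~\ref{lem:def-phi}, for a structure in $\SSS$, satisfying $\phi$ is the same as containing a complete order. So suppose $\Aa, \Bb \in \SSS$, $\Aa \models \phi$, and $h : \Aa \ra \Bb$ is a homomorphism. By Lemma~\ref{lem:def-phi}, $\Aa$ contains a complete order. By Lemma~\ref{lem:contains-order}, $\Bb$ contains a complete order. Applying Lemma~\ref{lem:def-phi} again (in the other direction), $\Bb \models \phi$. Since $\phi$ is a sentence, this is exactly what it means for $\phi$ to be preserved under homomorphisms on $\SSS$.

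Concretely, the proof I would write is:
\begin{proof}
  Suppose $\Aa$ and $\Bb$ are structures in $\SSS$ with $\Aa \models
  \phi$ and let $h : \Aa \ra \Bb$ be a homomorphism.  By
  Lemma~\ref{lem:def-phi}, $\Aa$ contains a complete order, and hence
  by Lemma~\ref{lem:contains-order} so does $\Bb$.  Applying
  Lemma~\ref{lem:def-phi} once more, $\Bb \models \phi$.
\end{proof}

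I do not anticipate any real obstacle: all the work has already been done in the three preceding lemmas, and the present statement is just their composition. The only point requiring the slightest care is that Lemma~\ref{lem:contains-order} is stated for a homomorphism $h : \Aa \ra \Bb$, which is precisely the direction we have, so no adjustment is needed; and that $\phi$ is a sentence, so ``preserved under homomorphisms on $\SSS$'' involves no free variables and reduces to the implication $\Aa \models \phi \Rightarrow \Bb \models \phi$ just established. (The broader significance — that this $\phi$ is \emph{not} equivalent on $\SSS$ to an existential-positive sentence, yielding the promised counterexample — presumably follows in the subsequent text and is not part of this lemma.)
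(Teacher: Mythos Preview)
Your proof is correct and is exactly the paper's approach: the paper's proof reads ``Immediate from Lemmas~\ref{lem:contains-order} and~\ref{lem:def-phi},'' and you have simply spelled out this immediate combination.
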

\begin{proof}
   Immediate from Lemmas~\ref{lem:contains-order}
and~\ref{lem:def-phi}. 
\end{proof}

\begin{lemma}
  There is no existential positive formula equivalent to $\phi$ on $\SSS$.
\end{lemma}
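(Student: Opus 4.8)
The plan is to show that the property ``contains a complete order'' cannot be captured by an existential positive sentence on $\SSS$, by exhibiting, for each candidate existential positive sentence $\psi$, a structure in $\SSS$ that satisfies $\psi$ but does not contain a complete order, assuming $\psi$ is equivalent to $\phi$ on $\SSS$. The key observation I would use is that existential positive sentences are preserved under homomorphisms not just within $\SSS$ but between arbitrary structures, and moreover that a structure $\Aa$ satisfies an existential positive sentence $\psi$ if and only if there is a homomorphism from some ``canonical'' structure associated with a disjunct of $\psi$ into $\Aa$; in particular, if $\psi$ has $\ell$ existential quantifiers, then whenever $\Aa \models \psi$ there is a substructure $\Aa_0 \subseteq \Aa$ with $|A_0| \le \ell$ such that $\Aa_0 \models \psi$ (take the witnesses for the quantifiers and the atomic facts they satisfy).

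First I would fix a hypothetical existential positive $\psi$ equivalent to $\phi$ on $\SSS$, and let $\ell$ be its number of existential quantifiers (equivalently, a bound on the size of a minimal model). Since $L_{\ell+1}$ contains a complete order, $L_{\ell+1} \models \phi$, hence $L_{\ell+1} \models \psi$; so there is a substructure $\Aa_0 \subseteq L_{\ell+1}$ with at most $\ell$ elements and $\Aa_0 \models \psi$. Now $\Aa_0 \in \SSS$ since $\SSS$ is closed under substructures, so by the assumed equivalence $\Aa_0 \models \phi$, and by Lemma~\ref{lem:def-phi} $\Aa_0$ contains a complete order, i.e.\ $L_n \subseteq \Aa_0$ for some $n \ge 2$. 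But $|A_0| \le \ell < \ell+1$, and $L_n \subseteq \Aa_0 \subseteq L_{\ell+1}$ forces, via the structure of $L_{\ell+1}$, that $n \le \ell$; this is consistent, so this direct attempt needs a sharper argument. The fix is to instead feed $\psi$ a structure with \emph{no} complete order at all but which is ``locally indistinguishable'' from one: take $\Aa := L_N$ with a single $S$-edge deleted, for $N$ larger than $\ell$. This $\Aa$ is in $\SSS$ (it is a substructure of $L_N$), and it does not contain a complete order — since any $L_n \subseteq \Aa$ with $n \ge 2$ would, by Lemma~\ref{lem:single-order} applied through the inclusion homomorphism into $L_N$, force $\Aa$ to contain the missing $S$-edge. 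Hence $\Aa \not\models \phi$.

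The main step is then to show $\Aa \models \psi$, which I would get by providing a homomorphism $h : L_m \ra \Aa$ for a suitable $m$, or more directly by showing that the minimal model $\Aa_0 \subseteq L_m$ of $\psi$ (which exists because $L_m \models \phi$ for large $m$, hence $L_m \models \psi$) maps homomorphically into $\Aa$. Since $\Aa_0$ has at most $\ell$ elements, and $\Aa$ is an ``almost complete'' order of length $N \gg \ell$, one can find a homomorphism: any substructure of some $L_m$ on at most $\ell$ elements embeds by an order-preserving, $S$-edge-preserving map into a sufficiently long gap-free segment of $\Aa$ (choose the deleted edge of $\Aa$ to lie outside a segment of length $\ell$, e.g.\ delete the edge $S(1,2)$ so that $\{2,\ldots,N\}$ induces a full copy of $L_{N-1}$ inside $\Aa$; then $\Aa_0 \subseteq L_\ell \iso L_{N-1}[\{2,\ldots,\ell+1\}]$ via the obvious isomorphism). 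Composing with the homomorphism $\Aa_0 \ra \Aa$ and using preservation of existential positive sentences under homomorphisms gives $\Aa \models \psi$. Since $\Aa \models \psi$ but $\Aa \not\models \phi$, $\psi$ is not equivalent to $\phi$ on $\SSS$, contradiction.

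The hard part will be pinning down precisely the statement that an existential positive sentence with $\ell$ quantifiers, when satisfied by a structure $\Bb$, is witnessed by a ``small'' substructure that maps homomorphically into $\Bb$ and preserving the sentence — and then verifying that this small substructure, being a substructure of some $L_m$ of size $\le \ell$, is a disjoint union of short gap-free order-segments, each of which maps into the long gap-free part of $\Aa$. Once the combinatorial shape of small substructures of the $L_m$'s is understood (each is a disjoint union of ``intervals'' carrying $O$ as a partial order and $S$ on consecutive elements, with $P$ possibly marking some endpoints), the required homomorphism into $\Aa$ is routine. I expect no genuine obstacle beyond care in bookkeeping.
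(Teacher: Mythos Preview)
Your first attempt already works, and it is essentially the paper's proof; you abandoned it too early. From $L_n\subseteq\Aa_0\subseteq L_{\ell+1}$ with $n\ge 2$ you concluded only $n\le\ell$, but Lemma~\ref{lem:single-order} says much more: applying it to the inclusion homomorphism $L_n\hookrightarrow\Aa_0$ (with $\Aa_0\subseteq L_{\ell+1}$) yields $L_n\cong\Aa_0\cong L_{\ell+1}$, so $|A_0|=\ell+1>\ell$, the desired contradiction. Unwound, this says exactly that no proper substructure of $L_{\ell+1}$ models $\phi$, i.e.\ every $L_n$ is a minimal model of $\phi$; the paper simply invokes this together with Lemma~\ref{lem:minimal} in two lines.

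Your second approach, by contrast, cannot succeed as written. You need a homomorphism from the small witness $\Aa_0\in\SSS$ into $\Aa$, but $\Aa_0\models\psi$, hence $\Aa_0\models\phi$ by the assumed equivalence, and $\phi$ is preserved under homomorphisms on $\SSS$ (Lemmas~\ref{lem:contains-order} and~\ref{lem:def-phi}); so any homomorphism $\Aa_0\to\Aa$ would force $\Aa\models\phi$, contradicting what you had just established about $\Aa$. Concretely, since $\Aa_0$ contains a complete order it must contain both endpoints of $L_m$ in $P$, together with $O$ between them and the full $S$-path connecting them (again by Lemma~\ref{lem:single-order}); a homomorphism into $\Aa$ would then have to hit both $1$ and $N$ and carry an $S$-path between them, which $\Aa$ lacks. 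Your side claim that $\{2,\ldots,N\}$ induces a copy of $L_{N-1}$ inside $\Aa$ is also false, since $2\notin P^{\Aa}$. In short: drop the detour through $\Aa$ and finish the first argument using Lemma~\ref{lem:single-order}.
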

\begin{proof}
  By Lemma~\ref{lem:minimal}, it suffices to show that $\phi$ has
  infinitely many minimal models in $\SSS$.  But this is immediate as
  for every $n\geq 2$, $L_n$ is a model of $\phi$ but no proper substructure of
  $L_n$ is a model of $\phi$.
\end{proof}

It is worth remarking that the collection of Gaifman graphs of structures in $\SSS$ is the class of all graphs and hence is certainly not quasi-wide.

\section{Conclusions}\label{sec:conclusion}
When $\CCC$ is a class of finite structures, there are essentially two
methods known for showing that it has the homomorphism preservation
property.  One is the method used by Rossman to establish the property
for the class of all finite structures, based on constructing
sufficiently saturated structures.  This method works on any
class closed under co-retracts.  The other, quite distinct method,
developed by Atserias et al., is based on the density of minimal
models and works for classes of sparse structures, i.e.\ classes in
which any sufficiently large structure is guaranteed not to be dense.
In the present paper, we have pushed the latter method further and
established the homomorphism preservation property for a richer
collection of classes.  None of these classes, it appears, is closed
under the kind of saturation construction used by Rossman and
therefore those methods would not apply.

\end{document}